\documentclass[format=acmsmall, review=false]{acmart}
\usepackage{acm-ec-25}
\usepackage{booktabs} 
\usepackage[ruled]{algorithm2e} 

\SetAlFnt{\small}
\SetAlCapFnt{\small}
\SetAlCapNameFnt{\small}
\SetAlCapHSkip{0pt}
\IncMargin{-\parindent}


\usepackage{threeparttable, widetable, subfigure, multirow}

\newtheorem{assumption}{Assumption}
\newcommand{\calF}[0]{\mathcal{F}}
\newcommand{\calH}[0]{\mathcal{H}}
\newcommand{\E}[0]{\mathbb{E}}
\newcommand{\R}[0]{\mathbb{R}}

\setcitestyle{authoryear}

\title[Enhancing the Merger Simulation Toolkit with ML/AI]{Enhancing the Merger Simulation Toolkit with ML/AI}

\author{Harold D. Chiang}
\affiliation{%
  \institution{University of Wisconsin-Madison}
  \city{Madison}
  \state{WI}
  \country{USA}
}
\email{hdchiang@wisc.edu}
\author{Jack Collison}
\affiliation{%
  \institution{University of Wisconsin-Madison}
  \city{Madison}
  \state{WI}
  \country{USA}
}
\email{jcollison@wisc.edu}
\author{Lorenzo Magnolfi}
\affiliation{%
  \institution{University of Wisconsin-Madison}
  \city{Madison}
  \state{WI}
  \country{USA}
}
\email{magnolfi@wisc.edu}
\author{Christopher Sullivan}
\affiliation{%
  \institution{University of Calgary}
  \city{Calgary}
  \state{AB}
  \country{CA}
}
\email{christopher.sullivan1@ucalgary.ca}

\begin{abstract}
This paper develops a flexible approach to predict the price effects of horizontal mergers using ML/AI methods. While standard merger simulation techniques rely on restrictive assumptions about firm conduct, we propose a data-driven framework that relaxes these constraints when rich market data are available. We develop and identify a flexible nonparametric model of supply that nests a broad range of conduct models and cost functions. To overcome the curse of dimensionality, we adapt the Variational Method of Moments (VMM) \citep{bennett2023vmm} to estimate the model, allowing for various forms of strategic interaction. Monte Carlo simulations show that our method significantly outperforms an array of misspecified models and rivals the performance of the true model, both in predictive performance and counterfactual merger simulations. As a way to interpret the economics of the estimated function, we simulate pass-through and reveal that the model learns markup and cost functions that imply approximately correct pass-through behavior. Applied to the American Airlines-US Airways merger, our method produces more accurate post-merger price predictions than traditional approaches. The results demonstrate the potential for machine learning techniques to enhance merger analysis while maintaining economic structure.
\end{abstract}

\keywords{merger simulation, variational method of moments (VMM), neural networks, airline markets}

\begin{document}

\begin{titlepage}

\maketitle

\begin{acks}
   We thank the audiences at the Midwest IO Fest 2024, IIOC 2025, Northwestern, Bristol, and Warwick for helpful discussions. We also thank the anonymous EC referees for their valuable comments and suggestions. Any errors are our own.
\end{acks}

\setcounter{tocdepth}{1} 
\tableofcontents

\end{titlepage}

\section{Introduction}

A central challenge in antitrust policy is assessing the competitive risks of proposed horizontal mergers. As emphasized in the U.S. Horizontal Merger Guidelines, Section 7 of the Clayton Act requires agencies to ``arrest anticompetitive tendencies in their incipiency,'' conducting an assessment of what will likely happen if a merger proceeds as compared to what will likely happen if it does not. While various tools are employed for this analysis, merger simulation based on structural economic models represents a sophisticated quantitative approach that can provide valuable indications of the scale and importance of competition between merging firms. These models typically focus on firms' incentives to change their actions in specific competitive dimensions, such as price, while abstracting from other competitive forces.

The standard merger simulation methodology, developed through seminal contributions  \citep{werden1994effects, nevo2000mergers}, relies on estimating demand elasticities and modeling firm conduct under specific assumptions about competitive behavior. Typically, firms are assumed to engage in Bertrand-Nash price competition with constant marginal costs. While this framework has proven valuable in many applications, retrospective studies indicate that its predictive performance is mixed. A key limitation identified in the literature is that the restrictive assumptions about firm conduct may not adequately capture the complexity of real-world competitive interactions \citep[see e.g.,][]{peters2006evaluating}. While one could alternatively pursue a purely data-driven approach using machine learning methods, such an approach would ignore the fundamental simultaneity in market equilibrium and likely produce poor predictions.\looseness=-1 

This paper proposes a novel approach that maintains the economic structure of merger simulation while relaxing assumptions about firm conduct when sufficiently rich market data are available. Our model maintains the core economic insight that market prices emerge from the simultaneous interaction of demand and supply conditions while relaxing parametric assumptions about both markups and cost functions. The framework respects the structure of oligopoly equilibrium -- where prices and quantities jointly solve a system of demand and supply equations. At the same time, we allow for a flexible, nonparametric specification of how firms set markups as a function of market conditions and the marginal cost they face, nesting both the standard Bertrand-Nash model and alternative models of competition. Moreover, economic theory guides our choice of instrumental variables for handling the endogeneity of prices and market shares. However, the high dimensionality of this problem poses significant estimation challenges that cannot be addressed with classical nonparametric methods.\looseness=-1 

We overcome these challenges by adapting recent advances in machine learning, specifically the Variational Method of Moments (VMM) developed by \cite{bennett2023vmm}. VMM combines deep learning with instrumental variables estimation through a novel min-max formulation of moment conditions. This approach, by incorporating instruments, can thus address the simultaneity inherent in market equilibrium models, where prices and quantities are jointly determined. By parameterizing the supply function as a flexible neural network and optimizing over a growing class of functions, we can use VMM to avoid restrictive parametric assumptions while maintaining the core economic structure of oligopoly competition. 

To motivate the choice of VMM, we contrast it with conventional Nonparametric Instrumental Variable (NPIV) methods. NPIV estimators offer flexibility in capturing complex relationships and possess well-established theoretical properties. However, they suffer from the curse of dimensionality, limiting their practicality in settings with multiple endogenous variables. Like many new machine learning-based methods, VMM mitigates this issue by using parametric models (e.g., neural networks) that adaptively grow in complexity with sample size. This approach preserves NPIV's strengths while mitigating dimensionality challenges. Moreover, in fixed-dimensional parametric settings, VMM coincides with the optimally weighted Generalized Method of Moments (GMM).
A key innovation in our adaptation of VMM addresses a key aspect of our application: while \cite{bennett2023vmm} provides valid element-wise inference for one-dimensional nuisance parameters, our focus is on predicting post-merger prices, a more complex functional of these parameters. To address this, we develop an inference procedure that leverages the numerical delta method, Holm's step-down procedure, and a permutation-based implementation. This ensures computationally feasible standard errors and confidence intervals.

Monte Carlo simulations demonstrate the strong predictive performance of our approach. When the data-generating process follows a standard model of competition (e.g., Bertrand), VMM recovers predictions close to the true structural model. More importantly, when the true model of competition differs from standard assumptions, our method substantially outperforms misspecified structural approaches. The inclusion of demand derivatives as inputs improves prediction accuracy, particularly with larger neural network architectures in data-rich environments. 

Beyond prediction, we address a common criticism of machine learning approaches: their ``black box'' nature. We devise a method to interpret the supply model implied by our VMM estimates by computing numerical approximations of cost pass-through matrices. We find that the pass-through implied by the flexible model closely aligns with that of the true model, demonstrating that our approach not only performs well in prediction but also captures the underlying economic structure of firm conduct.

The airline industry provides an ideal setting to evaluate our methodology. The industry features substantial variation in market structure both across routes and over time, allowing us to learn about competitive conduct under different market conditions. Additionally, several major mergers in recent years provide opportunities for retrospective analysis. We focus on predicting the price effects of the 2013 American Airlines-US Airways merger, comparing our flexible approach to standard merger simulation techniques.

Our empirical analysis yields several key findings. First, the flexible supply-side specification achieves better in-sample fit than the standard Bertrand-Nash model, reducing the passenger-weighted mean squared error in predicted prices by approximately 40\%. More importantly, when predicting post-merger prices in markets affected by the American-US Airways merger, our method produces estimates closer to observed outcomes compared to traditional merger simulation. The passenger-weighted mean squared error in post-merger prices for the Bertrand-Nash model is 365.71, while the passenger-weighted mean squared error for the flexible model is 66.93, a reduction of more than 80\%. The median predicted price increase from our flexible approach is 2.05\%, closely matching the difference-in-differences estimate of 2.92\%, while the standard Bertrand model predicts a median increase of 1.45\%.

Our method's computational requirements are manageable by modern standards, with estimation typically completed within hours on standard hardware. The approach is highly portable across empirical settings, requiring only standard market-level price and quantity data. We provide detailed guidance and code for implementation.

As for any method, there are trade-offs to consider. In particular, the method's applicability depends on the empirical setting. In cases where a proposed merger would create market structures radically different from those observed historically, theory-driven approaches may be more appropriate. However, our simulations show that the method performs well even with moderate sample sizes and when the exact post-merger market structure is not observed in the training data, provided there is sufficient variation in market structure to inform prediction. This makes the approach particularly valuable for industries with rich historical data on market structure changes.



This paper contributes to the IO literature that proposes nonparametric models of market equilibrium. We build our flexible model of supply on the identification results of \cite{bh14}. Similarly to \cite{c20}, who develops a method to estimate demand nonparametrically, our paper proposes a method for nonparametric estimation of the supply side. With similar motivation, recent papers by \cite{gh20b} (proposing a linear approximation of the markup function) and \cite{op24} (using the revelation principle) also develop methods to bring more flexible supply models to data. We complement these approaches by proposing a method that leverages advances in ML/AI coupled with a nonparametric structure.

A growing literature seeks to combine ML/AI methods with structure coming from economic theory. \cite{farrell2020deep} provide a theoretical foundation for using deep learning in two-stage econometric procedures, allowing for rich heterogeneity while maintaining model interpretability. \cite{kaji2023adversarial} introduce ``adversarial estimation,'' which formulates the estimator as a minimax problem between a generator and discriminator, achieving parametric efficiency under correct specification. \cite{wei2024estimating}'s Neural Net Estimator (NNE) approach trains neural networks to directly estimate structural parameters from data moments. Similar to these methods, we aim to leverage the flexibility and computational efficiency of machine learning while preserving the economic structure and interpretability of traditional models. We differ in the specific implementation and application: we formulate a nonparametric structural model of market equilibrium, and estimate it using the VMM formulation of \cite{bennett2023vmm}.

There has been increasing interest in employing general function approximation techniques - such as deep neural networks and random forests - in a unified approach to IV problems. Several studies, including those by \cite{dikkala2020minimax,lewis2018adversarial,liao2020provably},  and \cite{zhang2023instrumental} have advanced this literature. Our study is based on VMM for two main reasons. First, \cite{bennett2023vmm} derived its asymptotic distribution, which enables us to construct confidence intervals for the unknown supply-side function of interest. Second, because VMM subsumes optimally-weighted two-step GMM as a special case, it represents a natural evolution from the existing parametric estimation methods in IO.

This paper also contributes to the literature that seeks to enhance the standard merger simulation toolkit with more flexible and data-driven procedures. \cite{benkard2010simulating} and \cite{BrueggePolicyFcns24} develop methods based on policy functions. In comparison, our method considers a more restrictive static structure and does not predict post-merger entry or exit. The ML/AI tools we adopt, however, allow us to be nonparametric, and the explicit equilibrium model we develop allows us to quantify welfare and firm profit post-merger.

The remainder of the paper is organized as follows. Section \ref{sec:background} presents the standard merger simulation framework that serves as our benchmark. Section \ref{sec:flexible-model} develops our flexible approach and discusses identification. Section \ref{sec:vmm} describes the VMM estimation procedure and derives its statistical properties. Section \ref{sec:simulation} presents Monte Carlo evidence comparing our method to standard approaches. Section \ref{sec:application} applies our methodology to airline mergers and discusses the results. Section \ref{sec:conclusion} concludes with a discussion of limitations and directions for future research.

\section{Background: The Standard Merger Simulation Toolkit} \label{sec:background}

The standard merger simulation toolkit, developed building on the seminal contributions of \cite{bp93,hausman1994competitive,werden1994effects}, is based on a demand system and an assumed model of conduct. In this section, we introduce the market environment we will consider in the paper and summarize the basic elements of the merger simulation toolkit.\footnote{For a more detailed description of the toolkit, see \citep{ms21}.}\\

\noindent\textbf{Data-generating Process:} We observe pre-merger data on a set of products $\mathcal{J}$ offered by firms across a set of $\mathcal{T}$ markets. The researcher observes prices $p_{jt}$, market shares $s_{jt}$, a vector of product characteristics ${x}_{jt}$ that enter demand, a vector of cost shifters $\text{w}_{jt}$, and ownership structure $\mathcal{H}_t$ for each product-market pair $(j,t)$. For any variable ${y}_{jt}$, we denote ${y}_t$ as the vector of values in market $t$. We assume, across all markets, that the true demand system is given by:
\begin{equation}
    {s}_t = \mathscr{s}({p}_t, {x}_t, {\xi}_t, \theta^D_0).
\end{equation}

The term ${\xi}_t$ is a vector of unobservable product characteristics and $\theta^D_0$ is the vector of true demand parameters. Data on endogenous prices and quantities are generated by equilibrium in market $t$. On the supply side, firm behavior is characterized by a set of first-order conditions for the firms' profit maximization problems:
\begin{equation}
    {p}_t = \Delta_{0}({p}_t, {s}_t, \mathcal{H}_t, \theta^D_0) + {c}_{0t}.
\end{equation}

We denote $\Delta_{0}({p}_t, {s}_t, \mathcal{H}_t, \theta^D_0)$ as the vector of true markups and ${c}_{0t}$ as the vector of true marginal costs in market $t$. Therefore, an assumption on conduct, which determines the form of the markup function, and an assumption on how costs are determined are the key ingredients of the supply-side model.

A standard assumption on conduct is  Bertrand-Nash, whereby firms are playing a complete information pricing game, and the markup function is given by $\Delta_0({p}_t, {s}_t, \mathcal{H}_t, \theta^D_0) = \left( \mathcal{H}_t \odot \frac{\partial {s}_t}{\partial {p}_t} \right)^{-1} {s}_t$. Other assumptions are possible, including Cournot-Nash, Nash Bargaining, or profit weight models.  

Firms' costs are generated by some cost function $\mathscr{c}$, or ${c}_{0t}=\mathscr{c}(q_{jt}, \text{w}_{jt}, \omega_{0jt})$, where $\text{w}_{jt}$ and $\omega_{0jt}$ are, respectively, observed and unobserved cost-shifter variables. The canonical case in the literature is costs that are constant in quantities.\\

\noindent\textbf{The Merger Simulation Problem:} Suppose the researcher wants to predict prices following a horizontal merger that results in a post-merger ownership matrix $\tilde{\mathcal{H}}_t$ in market $t$. In the standard merger simulation toolkit, the researcher uses a combination of data and assumptions on the data-generating process to assess the effects of the merger. 

This analysis typically proceeds in three steps. First, the researcher formulates a parametric demand model and estimates demand primitives $\hat \theta^D$.\footnote{We abstract here from the possibility of demand misspecification and assume the researcher correctly specifies demand.} Second, under an assumption on conduct, whereby the researcher specifies a model $m$ with corresponding markup function $\Delta_m$, and the researcher inverts marginal costs ${c}_{mt} = {p}_t - \Delta_{m}({p}_t, {s}_t, \mathcal{H}_t, \hat \theta^D)$. Third, the researcher computes post-merger prices $\tilde{{p}}_t$ under the post-merger ownership structure $\tilde{\mathcal{H}}_t$. Post-merger prices solve a fixed point:
\begin{equation}\label{eq:postm_pred}
    \tilde{{p}}_t = {c}_{mt} + \Delta_{m}(\tilde{{p}}_t, \mathscr{s}(\tilde{{p}}_t, \hat \theta^D, \cdot), \tilde{\mathcal{H}}_t, \hat \theta^D).
\end{equation}

For example, under Bertrand-Nash, the markup function is $\left( \tilde{\mathcal{H}}_t \odot \frac{\partial \mathscr{s}(\tilde{{p}}_t, \hat \theta^D, \cdot)}{\partial {p}_t} \right)^{-1} \mathscr{s}(\tilde{{p}}_t, \hat \theta^D, \cdot)$. The same procedure applies to other models and is limited by the assumptions on conduct. Assumptions on cost efficiencies generated by the merger, quantified in post-merger cost vectors $\tilde{{c}}_{mt}$, can be incorporated by using this vector instead of ${c}_{mt}$ in Equation \eqref{eq:postm_pred}.\\

\noindent\textbf{Discussion:} The framework relies on assumptions on demand and supply. First, the researcher makes assumptions on the demand side to capture how consumers substitute across products. Second, she maintains restrictions on the supply side by specifying firms' cost functions and a model of firm conduct, constant across the pre- and post-merger periods.  We discuss these in turn.

In the context of merger review, demand elasticities may be recovered or calibrated from data on diversion, or estimated from simple demand systems \citep[e.g., logit as in][]{werden1994effects}. In the IO literature, a lot of effort has been devoted to estimating flexible models of demand \citep[e.g., ][]{blp95} that can then be used as input in merger simulation. In what follows, we assume the researcher can either calibrate or estimate substitution. 

On the supply side, we first remark on the scope of the exercise. By assuming that conduct is constant before and after the merger, and it is described by oligopoly equilibrium, the model is only capturing unilateral effects of mergers.\footnote{Merger may have important coordinated effects \citep{p20}. See, e.g., \cite{mw17} for a study that allows for and measures coordinated effects.} Within this framework, reliance on the strong assumptions on conduct and cost, such as Bertrand-Nash pricing and constant marginal cost, may have important consequences for prediction. For instance, \cite{peters2006evaluating} suggests that alternative assumptions on conduct may lead to more reliable conclusions in his setting of airline mergers. This motivates a more flexible approach.

\section{Flexible Model of Supply} \label{sec:flexible-model}

We propose a flexible model of supply that maintains the core economic structure of market equilibrium while relaxing standard assumptions about firm conduct. Our approach recognizes that merger simulation is fundamentally a complex prediction problem with simultaneity -- prices are equilibrium objects that jointly solve a system of demand and supply equations. 

\subsection{Model Setup}

In general, we can express prices in market $t$ as a function of markups and costs:
\begin{equation} \label{eq:general-form}
    p_t = \Delta(s_t,p_t,D(s_t,p_t,x_t);\mathcal{H}_t) + c(s_t,\text{w}_t,\omega_t),
\end{equation}
where $\Delta(\cdot)$ is a markup function and $c(\cdot)$ is a cost function. This formulation nests the standard Bertrand-Nash model but allows for more general forms of strategic interaction. With respect to the framework in Section 4.4 in \cite{bh14}, the expression above unpacks marginal revenue as the sum of price and markup, which entails no loss of generality. We further maintain two assumptions also imposed in \cite{bh14}:

\begin{assumption} \label{assump:uniqueness}
\emph{(Equilibrium Selection)} There exists a unique equilibrium, or the equilibrium selection rule is such that the same $p_t$ arises whenever the vector $(\text{w}_t,x_t,\omega_t,\xi_t)$ is the same.
\end{assumption}

This assumption, similar to Assumption 13 in \cite{bh14}, ensures that observed prices reflect stable equilibrium behavior. 

\begin{assumption} \label{assump:separable}
\emph{(Separability of Cost)} The cost function is separable in unobservable shocks:
\begin{equation} \label{eq:separable}
c(s_t,\text{w}_t,\omega_t) = \tilde{c}(s_t,\text{w}_t) + \omega_t.
\end{equation}
\end{assumption}

This separability assumption follows \cite{bh14} and is essentially without loss of generality, as the unobservable component $\omega_t$ can be defined as the residual between total costs and the component explained by observables.

To identify our model of supply, we will assume that demand is identified. This mirrors the two-step procedure in standard merger simulation, where demand is first estimated or calibrated.

\begin{assumption}\label{assump:knownderiv}
\emph{(Known Demand)} The matrix of demand derivatives is known, so that $D_t = D(s_t,p_t,x_t)$ is observed.
\end{assumption}

We further restrict the class of models we consider, so that prices enter markups only through demand derivatives:

\begin{assumption} \label{assump:markup}
\emph{(Markup Dependence)} The markup function $\Delta$ depends only on market shares $s_t$ and the matrix of demand derivatives $D_t$.
\end{assumption}

Under these assumptions, we can write the price equation as:
\begin{equation} \label{eq:reduced-form}
p_t = h(s_t,D_t,\text{w}_t;\mathcal{H}_t) + \omega_t,
\end{equation}
where $h(\cdot)$ is a flexible function that captures both markup and cost components. Notably, this formulation does not impose separability between markups and costs. We further discuss this assumption below.

\subsection{Discussion} \label{sec:moel-discussion}

Several aspects of our framework deserve emphasis. First, it significantly generalizes the standard merger simulation toolkit while maintaining its essential economic structure. The formulation in Equation \eqref{eq:reduced-form} nests the Bertrand-Nash model as a special case but allows for richer forms of strategic interaction. Assumption \ref{assump:markup} is satisfied by a broad range of conduct models beyond Bertrand-Nash, including Cournot competition, Stackelberg leadership, various forms of partial collusion, and models where firms maximize weighted combinations of profits and consumer surplus.\footnote{For a discussion of how the models above (and more) satisfy this assumption, see Appendix C in \cite{dmqsw24}.}


Second, while our framework is flexible, it is not completely unstructured. The economic content enters through the maintained assumption that prices emerge from equilibrium behavior and through the specification of which variables can enter the pricing function. These restrictions are essential for the identification and interpretation of the results.

There may be more additional economic or statistical restrictions on the $h$ function that a researcher may want to impose. Beyond separability, for instance, one may want $h_j$ to be decreasing in that product's own demand elasticity, as is the case in many standard models. This is in the spirit of the micro-founded economic restrictions that are imposed on nonparametric demand systems in \cite{c20}. It is possible to add these restrictions to our model, and we discuss in Appendix \ref{app:regular} how these can be incorporated within our modeling and estimation framework.

Finally, for merger simulation and other counterfactuals, we can use the estimated function $\hat{h}$ to solve for counterfactual prices that satisfy:
\begin{equation} \label{eq:counterfactual}
\tilde{p}_t = \hat{h}(\mathscr{s}(\tilde{p}_t), D(\tilde{p}_t), \text{w}_t;\tilde{\mathcal{H}}_t) + \hat{\omega}_t,
\end{equation}
where $\tilde{\mathcal{H}}_t$ represents the post-merger ownership structure and $\hat{\omega}_t$ are the estimated residuals. This maintains the equilibrium structure of standard merger simulation while allowing for flexible conduct.

\subsection{Identification}

Equation \eqref{eq:reduced-form} expresses equilibrium prices in a market as a function of endogenous market shares. Therefore, the identification of the flexible supply model must rely on instrumental variables. We specify a moment condition:
\begin{equation} \label{eq:moment}
\mathbb{E}[\omega_{jt} \mid z_{jt}, \text{w}_{jt}] = 0,
\end{equation}
where $z_{jt}$ is a vector of instruments. Following standard practice in the literature, candidate instruments include own and rival product characteristics, rival cost shifters, and taxes. Importantly, the simultaneity of the environment, where the demand equation $s_t = \mathscr{s}(p_t)$ expresses quantities as a function of prices, makes it necessary to maintain exclusion restrictions to identify $h_t$. 

To see why these are necessary, consider the case of simple logit demand where price and characteristics enter demand through a linear index $\delta_{jt} = \alpha p_{jt} + \beta x_{jt} + \xi_{jt}$. Under the \cite{berry1994} inversion, $\delta_{jt} = \log s_{jt} - \log s_{0t}$, so that inverse demand is $p_{jt} = \frac{\log s_{jt} - \log s_{0t}}{\alpha} - \frac{\beta x_{jt}+ \xi_{jt}}{\alpha}$. A concern then is that our flexible strategy would back out this inverse demand function, instead of the supply relation $h_t$. Given that Equation \eqref{eq:reduced-form} is restricted to be a function of $s_t, D_t$, and $\text{w}_t$, an exclusion restriction that prevents our procedure from identifying inverse demand is as follows:

\begin{assumption}\label{assump:exog}
\emph{(Instrument Exogeneity and Exclusion)} The vector of instruments $z_{jt}$ that satisfies $$\mathbb{E}[\omega_{jt} \mid z_{jt}, \text{w}_{jt}] = 0$$ contains demand shifter(s) $x^{(e)}_{jt}$ that are excluded from the vector $\text{w}_{jt}$.
\end{assumption}

Under this exclusion, in the logit example above, the moment condition would be violated for $h=\mathscr{s}^{-1}$ as the implied $\omega_{jt}$ explicitly depends on $x^{(e)}_{jt}$. We also maintain, similar to \cite{bh14}, a completeness assumption on the instruments. 

\begin{assumption}\label{assump:complete}
\emph{(Completeness)} For all functions $B(s_t,D_t,\text{w}_t;\mathcal{H}_t)$ with finite expectation, if \\
$\mathbb{E}[B(s_t,D_t,\text{w}_t;\mathcal{H}_t)\mid z_{jt}, \text{w}_{jt}] = 0$ almost surely, then $B(s_t,D_t,\text{w}_t) = 0$ almost surely.
\end{assumption}

Under these assumptions, we can prove that the function $h_j$ is identified:

\begin{theorem}\label{thm:id}
Under Assumptions \ref{assump:uniqueness}-\ref{assump:complete}, the function $h_j$ is identified for all $j=1,...,J$.    
\end{theorem}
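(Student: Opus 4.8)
The plan is to establish identification through a standard observational-equivalence argument that ultimately reduces to the completeness condition in Assumption \ref{assump:complete}. Fix a product $j$. I would begin by supposing that two candidate supply relations, $h_j$ and $\tilde{h}_j$, are observationally equivalent: each, together with its implied unobservable, reproduces the observed price data through Equation \eqref{eq:reduced-form}, so that
\begin{equation*}
p_{jt} = h_j(s_t,D_t,\text{w}_t;\mathcal{H}_t) + \omega_{jt} = \tilde{h}_j(s_t,D_t,\text{w}_t;\mathcal{H}_t) + \tilde{\omega}_{jt},
\end{equation*}
and each residual satisfies the moment condition in Equation \eqref{eq:moment}, i.e. $\mathbb{E}[\omega_{jt}\mid z_{jt},\text{w}_{jt}] = 0$ and $\mathbb{E}[\tilde{\omega}_{jt}\mid z_{jt},\text{w}_{jt}] = 0$.

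Next I would difference the two representations of $p_{jt}$. Because both equal the same observed price, subtracting gives $h_j(s_t,D_t,\text{w}_t;\mathcal{H}_t) - \tilde{h}_j(s_t,D_t,\text{w}_t;\mathcal{H}_t) = \tilde{\omega}_{jt} - \omega_{jt}$, where the left-hand side is a function of $(s_t,D_t,\text{w}_t;\mathcal{H}_t)$ only. Taking the conditional expectation of both sides given $(z_{jt},\text{w}_{jt})$ and using the two moment conditions to annihilate the right-hand side yields
\begin{equation*}
\mathbb{E}\bigl[\,h_j(s_t,D_t,\text{w}_t;\mathcal{H}_t) - \tilde{h}_j(s_t,D_t,\text{w}_t;\mathcal{H}_t)\,\big|\,z_{jt},\text{w}_{jt}\bigr] = 0 \quad \text{a.s.}
\end{equation*}
Setting $B = h_j - \tilde{h}_j$, which has finite expectation whenever the candidate functions do, I would then invoke Assumption \ref{assump:complete} to conclude that $B = 0$ almost surely, i.e. $h_j = \tilde{h}_j$. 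Repeating the argument for each $j = 1,\dots,J$ delivers the claim.

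The substance of the argument lies less in these manipulations than in ensuring that the object isolated by completeness is genuinely the supply relation $h_j$ rather than a rival solution such as inverse demand. This is precisely the role of the exclusion restriction in Assumption \ref{assump:exog}: as the logit discussion preceding the theorem illustrates, the inverse-demand candidate $h = \mathscr{s}^{-1}$ produces an implied $\omega_{jt}$ that loads on the excluded demand shifter $x^{(e)}_{jt}$, so it fails the moment condition and is thereby excluded from the set of observationally equivalent structures admitted in the first step. I expect the main obstacle to be not the deduction itself but the justification that completeness can plausibly hold with respect to $(s_t,D_t)$ given instruments $(z_{jt},\text{w}_{jt})$ -- a high-level condition that the exclusion restriction makes credible by guaranteeing that the instrument vector carries variation in demand fundamentals absent from the cost arguments of $h_j$.
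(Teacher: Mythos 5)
Your proposal is correct and takes essentially the same route as the paper: both reduce observational equivalence of $h_j$ and $\tilde{h}_j$ to the condition $\mathbb{E}[h_j - \tilde{h}_j \mid z_{jt},\text{w}_{jt}] = 0$ a.s.\ (the paper passes through the identified conditional mean $\mathbb{E}[p_{jt}\mid z_{jt},\text{w}_{jt}]$ rather than differencing the structural equations directly, which is a cosmetic difference) and then invoke completeness, with the same closing remark that Assumption \ref{assump:exog} rules out the inverse-demand candidate.
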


\begin{proof}
The proof follows Theorem 1 in \cite{bh14}. For any $j$, for the true $h_j$ function we have that at all fixed values of $(z_{jt}, \text{w}_{jt})$:
    \begin{align}
        \mathbb{E}[\omega_{jt} \mid z_{jt}, \text{w}_{jt}] &= \mathbb{E}[h_j(s_t,D_t,\text{w}_t;\mathcal{H}_t)\mid z_{jt}, \text{w}_{jt}] - \mathbb{E}[p_{jt}\mid z_{jt}, \text{w}_{jt}]\\
        \mathbb{E}[p_{jt}\mid z_{jt}, \text{w}_{jt}] &= \mathbb{E}[h_j(s_t,D_t,\text{w}_t;\mathcal{H}_t)\mid z_{jt}, \text{w}_{jt}],
    \end{align}
where the first equality follows by Assumption \ref{assump:exog}, and $\mathbb{E}[p_{jt}\mid z_{jt}, \text{w}_{jt}]$ is identified by the data. Suppose that the function $\tilde{h}_j$ is also such that:
$$\mathbb{E}[p_{jt}\mid z_{jt}, \text{w}_{jt}] = \mathbb{E}[\tilde{h}_j(s_t,D_t,\text{w}_t;\mathcal{H}_t)\mid z_{jt}, \text{w}_{jt}],$$
so that $$\mathbb{E}[{h}_j(s_t,D_t,\text{w}_t;\mathcal{H}_t)\mid z_{jt}, \text{w}_{jt}] - \mathbb{E}[\tilde{h}_j(s_t,D_t,\text{w}_t;\mathcal{H}_t)\mid z_{jt}, \text{w}_{jt}] = 0 \qquad a.s.$$
By Assumption \ref{assump:complete}, it follows that the function $B = h_j - \tilde{h}_j = 0$ almost surely, which in turn implies that $h_j$ is identified. Note that Assumption \ref{assump:exog} also rules out that the function $\tilde{h}_j$ could be the inverse demand function.
\end{proof}

It is useful to draw a parallel between our result in Theorem \ref{thm:id} and the identification results in Section 4.4 of \cite{bh14}. While they establish the identification of the cost shocks without imposing a supply model, we show identification of the full supply function. This is because our assumptions go beyond Assumption 7b in \cite{bh14} in several important ways. First, we impose a functional form on marginal revenue in Equation \eqref{eq:general-form}, and unpack marginal revenue as the sum of price and markups. Second, we allow prices to only impact markups through shares and demand derivatives (Assumption \ref{assump:markup}). Third, we assume that demand derivatives are known (Assumption \ref{assump:knownderiv}). These assumptions come at a cost in terms of generality but are fit for our purposes and are satisfied by the standard conduct models considered in most of the empirical literature. In particular, the models in Remark 1 of \cite{bh14} that satisfy their assumptions 7a and 7b, also satisfy our assumptions \ref{assump:uniqueness}-\ref{assump:complete}. 

While Theorem \ref{thm:id} establishes identification of our flexible model, estimation poses significant challenges. Standard nonparametric instrumental variables techniques are likely to perform poorly in finite samples due to the curse of dimensionality. The next section describes how we overcome these challenges using recent advances in machine learning.


\section{Estimation and Inference} \label{sec:vmm}

We now turn to the estimation of the flexible supply function via VMM. To contextualize VMM as our estimator of choice, we first consider Nonparametric Instrumental Variable (NPIV) methods. NPIV estimators are well-documented (e.g., \citealt{chen2007large}; \citealt{carrasco2007linear}) and offer flexibility in modeling complex relationships without restrictive parametric assumptions. Their theoretical foundations provide robust asymptotic properties under appropriate conditions, making them appealing for situations where the functional form of structural equations is unknown.

However, conventional NPIV methods are limited by the curse of dimensionality, which restricts their practical application to low-dimensional settings. In contrast, machine learning-based IV estimation methods, such as VMM and the two neural network-based average derivative estimators presented by \cite{chen2023efficient} address this issue by using parametric models that grow in complexity with sample size. VMM retains conventional NPIV's strengths, such as handling complex variable relationships, while mitigating dimensionality challenges. Notably, VMM coincides with the optimally weighted Generalized Method of Moments (GMM) when the unknown parameter is of fixed, finite dimensionality  (Lemma 1 in \citealt{bennett2023vmm}). This underscores VMM's role as a direct generalization of traditional parametric estimators, adaptable to high-dimensional contexts.\looseness=-1

We now formally explain our VMM-based estimation procedure. Explicitly, we use a moment condition for structural markup:
\begin{equation}
    \E[p_{jt} - h_j({s}_t, {D}_t, w_{t}; \theta, \mathcal{H}_t) \mid z_t, w_t] = 0
\end{equation}
 Given a preliminary consistent estimate $\tilde \theta_N$ estimator solves a min-max program:
\begin{align} \label{eq:vmm}
    \hat \theta_N &= {\rm argmin}_{\theta \in \Theta}{\rm sup}_{f \in \mathcal{F}_N} \frac{1}{TJ} \sum_{j,t} f(z_{jt})' \omega_{jt}(\theta) - \frac{1}{4TJ}\sum_{j,t} (f(z_{jt})' \omega_{jt}( \tilde{\theta}_N))^2 - R_N(f, h)  \\
    & \qquad \text{where } \;\; \omega_{jt}(\theta) = p_{jt} - h_j(s_t, D_t, w_{t}; \theta, \mathcal{H}_t) \nonumber
\end{align}

Following the identification arguments in Section \ref{sec:flexible-model}, the estimator includes a vector of observable cost shifters $w_{jt}$ and a vector of instruments $z_{jt}$.\footnote{In practice, we exclude observable cost shifters from the set of instruments. Given that the function $f$ is highly nonlinear, the cost shifters no longer explicitly instrument for themselves. Simulations reveal better estimates upon the exclusion of $w$ from the set of instruments.} The instruments are necessary to address the endogeneity of market shares $s_t$ and demand derivatives $D_t$. In general, $f \in \calF_N$ and $h \in \calH_N$ belong to sequences of function classes. We use classes of neural networks with growing width and depth, allowing flexible controls of model complexity to cope with the curse of dimensionality. $R_N$ is a regularizer $R_N: \calF_N \times \mathcal{H}_N \to [0,\infty]$ that penalizes the complexity of the neural network; we note in Appendix \ref{app:regular} that regularization can also be used to adhere to economic properties of the supply function. $\theta$ is a potentially large set of parameters that pins down the function $h$. Under standard regularity assumptions, then Theorem 4 in \cite{bennett2023vmm} shows that the estimator in Equation \eqref{eq:vmm} is consistent for the true parameter values $\theta_0$.

\subsection{Inference}

We now turn to the quantification of uncertainty in the flexible supply function. 
Importantly, while VMM offers valid element-wise inference procedures for the underlying nuisance parameters, our focus is on predicting post-merger prices -- a complex functional of these parameters. To address this, we develop an inference method that employs the numerical delta method in conjunction with Holm's step-down procedure and a permutation procedure. This approach enables the quantification of uncertainty in post-merger price predictions through standard errors and confidence intervals, ensuring computational feasibility. If $\tilde \theta_N\stackrel{p}{\to}\theta_0$, under regularity conditions, Theorems (2)-(3) in \cite{bennett2023vmm} imply:
\begin{gather*}
    \sqrt{N}(\hat \theta_N - \theta_0) \stackrel{d}{\to} N(0,\Omega_0^{-1}) \\
    \Omega_0 = \E\left[\E[\nabla_\theta \omega(\theta_0) \mid z, w]' \E[\omega(\theta) \omega(\theta)' \mid z, w]^{-1} \E[\nabla_\theta \omega(\theta_0) \mid z, w] \right]
\end{gather*}

For inference on the supply function $h: \R^b \supset \Theta \to \R^d$ at a set of $d$ observations, the numerical delta method in \cite{hong2015numerical} implies:
\begin{equation} \label{eq:asymptotic}
    \sqrt{N}(h(\hat\theta_N, \mathcal{H}) - h(\theta_0, \mathcal{H})) \stackrel{d}{\to} N(0, \nabla_{\theta'} h(\theta_0, \mathcal{H}) \Omega_0^{-1} \nabla_{\theta'} h(\theta_0, \mathcal{H})')
\end{equation}
Note that $\nabla_{\theta} h(\theta_0; \mathcal{H})$ has dimension $d \times b$. In the simplest case, suppose that $d = 1$ and the parameter of interest is $h_x(\theta_0, \mathcal{H})$. Lemma (9) in \cite{bennett2023vmm} states that, for any $\beta \in \mathbb{R}^b$, we have:
\begin{equation} \label{eq:variance-est}
    \beta^T\Omega_0^{-1} \beta = -\frac{1}{4} \inf_{\gamma\in \mathbb R^b} \sup_{f\in \calF} \E[f(z)' \nabla_\theta \omega(\theta_0) \gamma] - \frac{1}{4} \E[(f(z)' \omega(\theta_0))^2] - 4 \gamma' \beta - R_N(f)
\end{equation}
Taking $\beta = \nabla_\theta h_x(\theta_0, \mathcal{H})$ for $x$, the solution to the optimization problem above yields the asymptotic variance in Equation \eqref{eq:asymptotic}. The gradient $\nabla_\theta h_{x}(\theta_0, \mathcal{H})$ is difficult to compute analytically but numerical differentiation can be employed, e.g., \cite{hong2015numerical}, and $\hat \theta_N$ can be used in place of $\theta_0$. For numerical differentiation to be valid, the $\epsilon$ in the difference has to satisfy $\epsilon\to 0$ and $N\epsilon/\log N\to \infty$ from Theorem 1 in \cite{hong2015numerical}.\footnote{In practice, we use automatic differentiation in \texttt{torch}. Broadly speaking, the module stores a computational graph as the neural network is fit; training requires and stores the gradients in backpropagation. We note that we are still subject to the regularity condition from \cite{hong2015numerical}.}

However, this approach cannot obtain a covariance matrix when $d \geq 2$. We extend the method to provide a simultaneous confidence interval by adapting Holm's Step-Down procedure \citep{holm1979simple} with the estimates for $\hat \sigma_{x_j}^2(\hat \theta, \mathcal{H})$ and $h_{x_j}(\hat{\theta}, \mathcal{H})$ for each $j = 1, ..., d$. The set of critical values $T_{\alpha}$ is known for significance levels $\frac{\alpha}{d + 1 - k}$ with $k = 1, ..., d$. For any ordering of $x$ and fixed ordering $T_{\alpha}$, we can compute the (vectorized) confidence interval $h_x(\hat\theta, \mathcal{H}) \pm N^{-\frac{1}{2}} \hat\sigma_x(\hat \theta, \mathcal{H}) T_{\alpha}$.
To implement the procedure, we construct the intervals for all permutations of $j = 1, ..., d$, resulting in $d!$ permutations of $x$. This is because we must consider any possible ordering of the $p$-values of $x_1, ..., x_d$. The simultaneous confidence interval is subsequently the union of the bounds in each of the $d!$ permutations. Formally, the proposed procedure is as follows:

\begin{algorithm}[H] 
    \caption{Simultaneous Confidence Interval for VMM}
    \SetAlgoNoLine
    \For{each index $j \in \{1, ..., d\} \equiv J$}{
        Estimate $\hat \sigma_{x_j}^2(\hat \theta, \mathcal{H})$ for $\sigma_{x_j}^2(\theta_0, \mathcal{H})$ at $x_j$ by solving Equation \ref{eq:variance-est}
    }
    Fix critical values $T_{\alpha} = \left\{ T_{\alpha_k} : k = 1, ..., d \right\}$ where $\alpha_k = \frac{\alpha}{d + 1 - k}$
    
    \For{each permutation $\tilde{J}$ of indices $J$}{
        Arrange values $\tilde x$ and $\hat\sigma_{\tilde x}$ using permuted indices $\tilde J$
        
        Construct bounds as $h_{\tilde x}(\hat \theta, \mathcal{H}) \pm N^{-\frac{1}{2}} \hat\sigma_{\tilde x}(\hat \theta, \mathcal{H}) T_{\alpha}$ for fixed $T_{\alpha}$
    }
    Collect the simultaneous confidence interval as the union of bounds
    \label{alg:inference}
\end{algorithm}

This novel procedure allows us to quantify uncertainty simultaneously across many predicted prices. Importantly, we can construct confidence intervals on counterfactuals using our method, meaning we can put bounds on predictions for post-merger prices. 


\section{Monte Carlo Simulations} \label{sec:simulation}

We illustrate the predictive performance of our flexible model with simulations. We evaluate the performance of VMM against the standard merger simulation toolkit, where the researcher imposes a potentially misspecified model. To develop this comparison, we first create an artificial merger simulation exercise, where different models are trained/estimated on pre-merger data, and deployed to predict counterfactual post-merger price changes. To dig deeper into where the differences in performance come from, we also (i) evaluate out-of-sample fit of the different models on a subsample of the pre-merger data, and (ii) compute implied pass-through matrices for each model. Both exercises reveal that VMM outperforms other methods because it is able to learn the true structural relationship on the supply side.\\

\noindent \textbf{Data Generation:} We simulate data using the \texttt{pyblp} framework from \cite{cg19} for $T = 100$, $1,000$, and $10,000$ markets. In each market $t$, the number of single-product firms $J_t$ is either $J_t = 2$ or $J_t = 3$ for duopolies and triopolies, respectively.

We adopt a simple logit framework for demand with market size normalized to one. Consumer $i$ receives indirect utility from consuming product $j$ in market $t$, given by:
\begin{equation*}
    u_{ijt} = \alpha p_{jt} + x_{jt} \beta + \xi_{jt} + \varepsilon_{ijt}.
\end{equation*}

We include a vector of a constant and two observed product characteristics in $x_{jt}$ and the price of the product $p_{jt}$. $\xi_{jt}$ and $\varepsilon_{ijt}$ are unobservable shocks at the product-market and individual-product-market-level, respectively. The utility of the outside option is normalized to $u_{i0t} = \varepsilon_{i0t}$. We draw observed product characteristics independently from normal distributions $N(1, 0.25)$, while $\varepsilon_{ijt}$ is assumed to be distributed Type I Extreme Value. The mean taste parameters $\beta$ are set to $\beta = [-4, 3, 6]$ and the price coefficient is set to $\alpha = -0.25$. 

On the supply side, we separately generate data under two conduct models: Bertrand-Nash and profit-weight model. Stacking first-order conditions, the true markups under either model can be represented as:
\begin{equation} \label{eq:sim-price}
    p_t - c_t = \left(-\mathcal{H}_t \odot D'_t(p_t) \right)^{-1} s_t(p_t)
\end{equation}

Under Bertrand-Nash, the ownership matrix is an identity matrix while under profit-weight conduct, we set $\mathcal{H}_{jkt} = 0.75$ for $j \neq k$ and $\mathcal{H}_{jkt} = 1$ for $j = k$. We assume that the marginal cost of producing product $j$ in market $t$ is given by:
\begin{equation*}
    c_{jt} = w_{jt} \gamma + \omega_{jt}
\end{equation*}

We include a constant and two observable cost shifters (excluded from demand) in the vector $w_{jt}$. Marginal cost also depends on a true unobservable cost shock $\omega_{jt}$. We draw observed cost shifters independently from normal distributions $N(1, 0.25)$. The cost parameters are set to $\gamma = [3, 6, 4]$. We adopt the default in \texttt{pyblp} by drawing unobserved demand and cost shocks $\xi_{jt}$ and $\omega_{jt}$ from a mean-zero bivariate normal distribution with variances of one and correlation of 0.9. Under the given model of conduct, prices solve the fixed point of Equation \eqref{eq:sim-price}. The Bertrand-Nash and profit-weight simulations yield mean own-price elasticities $-3.29$ and $-4.32$, and mean diversions to the outside option of $0.34$ and $0.50$, respectively.

In the estimation routine, the sample is split into a training set and a hold-out test set, stratified to keep whole markets intact, prevent overfitting, and evaluate out-of-sample performance. The model is trained on duopoly and triopoly markets. We include both market structures to give the estimator variation in market structure to learn the supply function. The simulation environment mimics a merger simulation exercise in that the estimated model has seen enough variation to extrapolate well to post-merger behavior upon merging two firms in a triopoly market.\\


\noindent \textbf{Merger Simulation Models:} We describe now the set of models that we use to generate predictions, and how each is estimated/fit from the pre-merger data. First, within the standard merger simulation toolkit, we impose three different canonical models indexed by $m$: Bertrand $(m=B)$, monopoly $(m=M)$, and perfect competition (marginal cost pricing, $m=P$). Under each model $m$, we can compute the implied markups $\Delta_m$ and recover marginal cost shocks $\omega^m$ as the residuals to the regression 
$    p_{jt} - \Delta^m_{jt} = w_{jt} \gamma + \omega_{jt}^m$, which we estimate by OLS.
For the VMM model, we estimate the function $h$:
\begin{equation*}\label{eq:flex-vmm}
    p_{jt} = h_j(s_t, D_t, w_{jt}; \mathcal{H}_t) + {\omega}_{jt}
\end{equation*}
with the VMM formulation, obtaining the implied shocks $\hat{\omega}$.



The neural networks are implemented using the \texttt{torch} package in Python. The objective function for the VMM supply function $h_j$ is given by Equation \eqref{eq:vmm}. We do not include regularization in our baseline simulations. The variance estimates use the objective function from Equation \eqref{eq:variance-est}.

When estimating VMM, we consider two aspects of implementation. To illustrate how knowing the demand side can improve performance of the model, we perform estimation with and without the demand derivatives. Given that the demand system is simple logit, neural nets could learn the demand derivatives from market share data alone. However, providing the derivatives directly can improve performance, especially in small samples. We also experiment with different sizes of the neural net: a ``small'' specification that uses a $3 \times 3$ hidden layer, and a ``large'' specification that uses a $100 \times 100$ hidden layer.  



\subsection{Merger Simulation Results}

Now that we have constructed estimators, the ultimate goal of the structural object recovered in estimation is to conduct a merger simulation and compare predictions with the synthetic data. 

To obtain post-merger predictions for model $m$, we find a vector of prices $\tilde{p}^m_t$ in all markets $t$ that simultaneously satisfy the demand system, and the first order conditions given the post-merger ownership matrix $\tilde{\mathcal{H}}_t$. For the standard merger simulation toolkit, we rely on a contraction mapping from \cite{morrow2011contraction}. Using our estimated flexible supply function, we modify this method into a root-finding problem:\footnote{Alternatively, we set up the problem as a contraction wherein we iterate on prices. In practice, we found that root-finding converged market-wise to a solution to a tolerance of $1e-6$ and the solution matched that of the contraction. We opted for the root-finding technique because it was significantly faster.}
\begin{equation} \label{eq:root-prices}
    \tilde{p}_t - \hat{h}(s_t(\tilde{p_t}), D(p_t), w_t; \tilde{\mathcal{H}}_t) - \hat \omega_t = 0
\end{equation}
We rely on the structure of an estimated demand system $s(\cdot)$ and its derivatives $D(\cdot)$ to solve for equilibrium shares and, subsequently, counterfactual prices under post-merger ownership $\tilde{\mathcal{H}}$. We hold all the parameters in demand and cost at their true or estimated values, and hold the cost shocks (for both the true and imposed model) in the post-merger period fixed at their pre-merger values.

\begin{table}[h!]
    \captionsetup{justification=centering}
    \caption{Prediction Error for Bertrand DGP Merger Simulation (Duopolies and Triopolies)}
    \begin{minipage}{0.6\linewidth}
        \centering
        \label{fig:ms_comparison_baseline}
        \includegraphics[width=\textwidth]{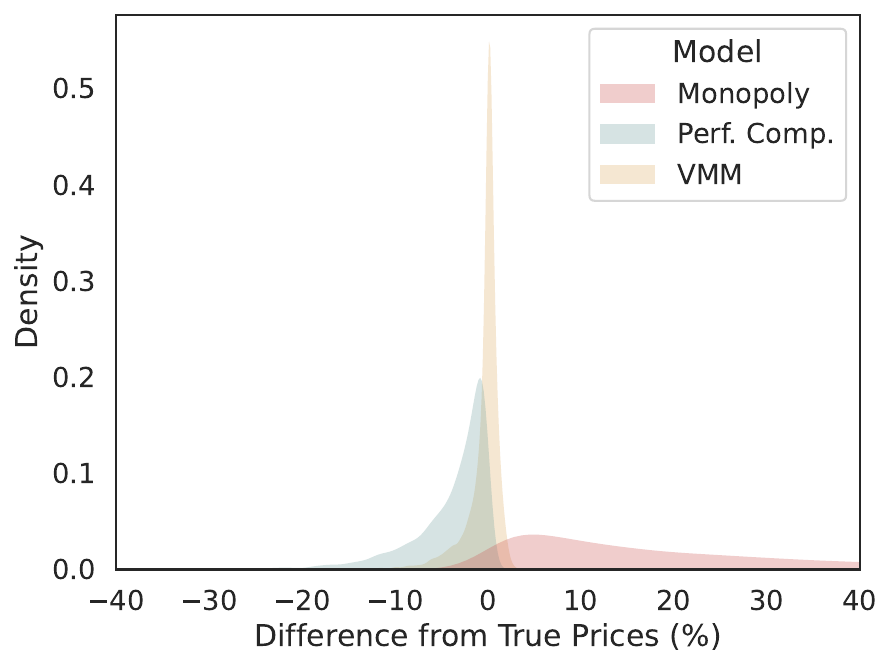}
    \end{minipage}
    \begin{minipage}{0.2\linewidth}
        \centering
        \begin{tabular}{lc}
             Model & MSE \\ \hline 
             Monopoly & 26.26 \\ 
             Perf. Comp. & 1.80 \\ 
             VMM & 0.27
        \end{tabular}
    \end{minipage}
    \begin{tablenotes}[flushleft]
  \setlength\labelsep{0pt}
    \footnotesize
    \item Notes: The figure and table show the distribution and mean squared error (MSE) in prices for the true model (Bertrand), a set of standard models, and the VMM model in the merger simulation exercise. The model is trained on duopolies and triopolies. The MSE is computed on a hold-out test sample restricted to markets where merging firms are present. The neural network used in estimation is small with a $3 \times 3$ hidden layer.
    \end{tablenotes}
    \label{tab:bertrand-du-tri-mergers}
\end{table}


In Table \ref{tab:bertrand-du-tri-mergers}, we compare the distributions of percentage discrepancies $(\tilde{p}^m_{jt} - \tilde{p}_{jt})/\tilde{p}_{jt}$ between predicted and actual post-merger prices. We also compute prediction MSEs for each model, obtained as $\sum_{j,t} (\tilde{p}^m_{jt} - \tilde{p}_{jt})^2$ for each model $m$.

We compare post-merger prices under the true model to prices under different assumptions of conduct to evaluate performance. The monopolist performs the worst with prices far higher than the true prices. Perfect competition significantly underpredicts price changes. Our VMM estimator greatly outperforms all misspecified models and is close to the true prices. It is important to note that the merger of triopolies results in duopolies which the training process is able to see in-sample. In Table \ref{tab:bertrand-tri-mergers}, we train on only triopolies so the merger resulting in duopolies is strictly out of sample. The results are very similar: our estimator greatly outperforms all misspecified models.

\begin{table}[h!]
    \captionsetup{justification=centering}
    \caption{Prediction Error for Profit Weight DGP Merger Simulation (Duopolies and Triopolies)}
    \begin{minipage}{0.6\linewidth}
        \centering
        \label{fig:ms_comparison_complex}
        \includegraphics[width=\textwidth]{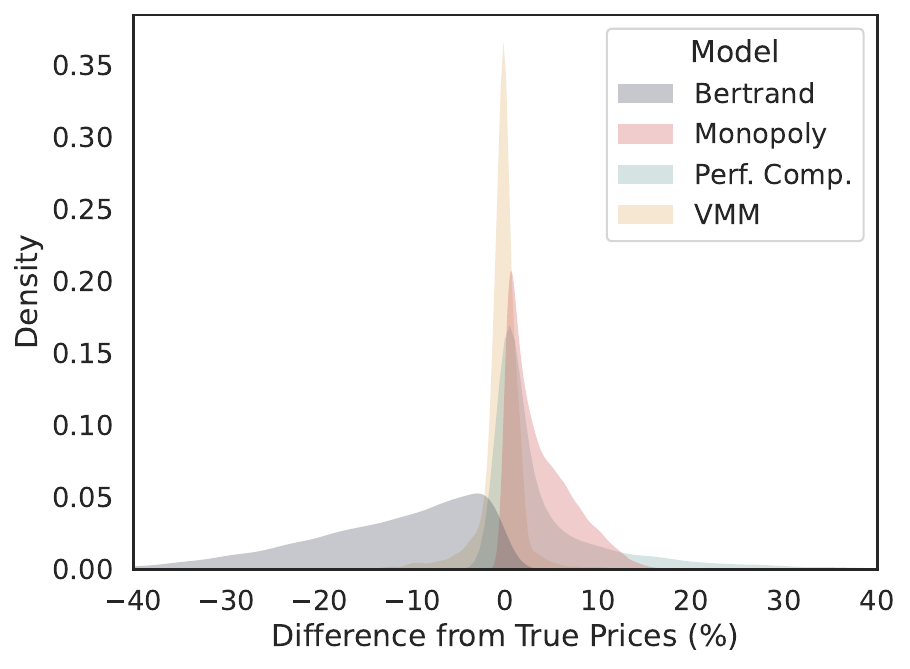}
    \end{minipage}
    \begin{minipage}{0.2\linewidth}
        \centering
        \begin{tabular}{lc}
             Model & MSE \\ \hline 
             Bertrand & 15.19 \\ 
             Monopoly & 1.55 \\ 
             Perf. Comp. & 3.57 \\ 
             VMM & 0.24
        \end{tabular}
    \end{minipage}
    \begin{tablenotes}[flushleft]
  \setlength\labelsep{0pt}
    \footnotesize
    \item Notes: The figure and table show the distribution and mean squared error (MSE) in prices for the true model (profit-weight with $\kappa = 0.75$), a set of standard models, and the VMM model in the merger simulation exercise. The model is trained on duopolies and triopolies. The MSE is computed on a hold-out test sample restricted to markets where merging firms are present. The neural network used in estimation is small with a $3 \times 3$ hidden layer.
    \end{tablenotes}
    \label{tab:profit-weight-du-tri-mergers}
\end{table}

The results in Table \ref{tab:profit-weight-du-tri-mergers} show similar performance in the profit-weight simulations. Bertrand is now misspecified and performs poorly in the merger simulation exercise. Equilibrium prices generated from perfect competition and a monopolist also do not perform well. Our method performs again greatly outperforms all misspecified models and is close to the true post-merger prices. We evaluate performance with a model trained solely on tripolies in Table \ref{tab:profit-weight-tri-mergers} and find that our model again outperforms all others.

\subsection{Simulation Results}

We compute MSE on a hold-out sample to evaluate performance outside the training sample. The results in Table \ref{tab:small-bertrand} and Table \ref{tab:large-profit-weight} vary sample sizes, assumptions on conduct, and the architecture of the estimator.\footnote{Additional results can be found in Table \ref{tab:large-bertrand} and Table \ref{tab:small-profit-weight}.} The first column describes the sample size in terms of the number of markets and the second column denotes whether demand derivatives were included in training. The remaining columns refer to the MSE of the relevant model, i.e., the true model, Bertrand, monopoly, perfect competition, and VMM, respectively.

\begin{table}[htb]
\footnotesize
\caption{MSE Across Models (Bertrand DGP, Small Network)}
\label{tab:small-bertrand}
\centering
\begin{threeparttable}
        \begin{widetable}{.98\columnwidth}{lcccccc}\toprule
             $T$  & True Model & \multicolumn{3}{c}{Standard Models} & VMM & $D_t$ included \\ \cline{3-5}
             & &  $B$ & $M$ & $P$ & & \\
             \midrule
             \multirow{2}{*}{$100$}  & \multirow{2}{*}{$0.895$} & \multirow{2}{*}{$0.895$} & \multirow{2}{*}{$578.773$} & \multirow{2}{*}{$9.411$} & $1.910$ & No \\
               &  &  & &  & $1.708$ & Yes \\ 
             \midrule
             \multirow{2}{*}{$1,000$}  & \multirow{2}{*}{$0.889$} & \multirow{2}{*}{$0.889$} & \multirow{2}{*}{$1022.812$} & \multirow{2}{*}{$8.170$} & $2.211$ & No \\
              &  &  &  &  & $1.065$  & Yes \\ 
             \midrule
             \multirow{2}{*}{$10,000$}  & \multirow{2}{*}{$0.962$} &  \multirow{2}{*}{$0.962$} & \multirow{2}{*}{$1788.816$} & \multirow{2}{*}{$8.487$} & $2.355$ & No \\
               &  &  &  &  & $1.146$ & Yes \\ \bottomrule
        \end{widetable} 
        \begin{tablenotes}[flushleft]
  \setlength\labelsep{0pt}
    \footnotesize
    \item Notes: The table reports the mean squared error (MSE) in prices for the true model (Bertrand), a set of standard models, and the VMM model. The model is trained on duopolies and triopolies. The MSE is computed on a hold-out test sample restricted to markets where merging firms are present. The neural network used is small with a $3 \times 3$ hidden layer.
    \end{tablenotes}
    \end{threeparttable}
\end{table} 

Table \ref{tab:small-bertrand} describes the Bertrand simulations with a small neural network for the flexible models of supply. Generally, the shocks implied by perfect competition $\omega^P$ and especially monopoly $\omega^M$ are far away from the true shocks $\omega^B$ and $\omega$. The shocks implied by our VMM estimator $\hat \omega$ outperform all misspecified models. The performance of our estimator without demand derivatives remains stable across sample sizes. 
In Table \ref{tab:large-bertrand}, we increase the size of the neural network and allow more parameters to better fit the nonlinear relationship; we see performance improve with sample sizes even without demand derivatives, fitting the intuition. On the whole, our estimator outperforms all misspecified models, sometimes by an order of magnitude.

\begin{table}[htb]
\footnotesize
\caption{MSE Across Models (Profit Weight DGP, Large Network)}
\label{tab:large-profit-weight}
\centering
\begin{threeparttable}
        \begin{widetable}{.98\columnwidth}{lcccccc}\toprule
             $T$  & True Model & \multicolumn{3}{c}{Standard Models} & VMM & $D_t$ included \\ \cline{3-5}
             & &  $B$ & $M$ & $P$ & & \\
             \midrule
             \multirow{2}{*}{$100$}  & \multirow{2}{*}{$0.895$} & \multirow{2}{*}{$8.351$} & \multirow{2}{*}{$5.973$} & \multirow{2}{*}{$13.372$} & $1.845$ & No \\
               &  &  & &  & $2.816$ & Yes \\ 
             \midrule
             \multirow{2}{*}{$1,000$}  & \multirow{2}{*}{$0.889$} & \multirow{2}{*}{$7.877$} & \multirow{2}{*}{$6.650$} & \multirow{2}{*}{$9.962$} & $1.673$ & No \\
              &  &  &  &  & $1.297$ & Yes \\ 
             \midrule
             \multirow{2}{*}{$10,000$}  & \multirow{2}{*}{$0.962$} &  \multirow{2}{*}{$8.719$} & \multirow{2}{*}{$7.049$} & \multirow{2}{*}{$10.853$} & $1.175$ & No \\
               &  &  &  &  & $1.177$ & Yes \\ \bottomrule
        \end{widetable} 
        \begin{tablenotes}[flushleft]
  \setlength\labelsep{0pt}
    \footnotesize
    \item Notes: The table reports the mean squared error (MSE) in prices for the true model (profit weight with $\kappa = 0.75$), a set of standard models, and the VMM model. The model is trained on duopolies and triopolies. The MSE is computed on a hold-out test sample restricted to markets where merging firms are present. The neural network used in estimation is large with a $100 \times 100$ hidden layer.
    \end{tablenotes}
    \end{threeparttable}
\end{table}

The results in Table \ref{tab:large-profit-weight} are generated with the profit-weight simulations and a large neural network for the flexible models of supply. The collusive profit-weight model is more difficult for a researcher to estimate because there are free parameters in the ownership matrix. A similar pattern emerges in the results compared to the Bertrand data-generating process. Monopoly- and perfect competition-implied shocks, $\omega^M$ and $\omega^P$, respectively, are far from the true shocks, and now Bertrand-implied shocks $\omega^B$ are far from the truth as well. Our VMM estimator outperforms all misspecified models with implied shocks $\hat \omega$ that are close to the ground truth. The model performs better as the sample size increases, especially with the inclusion of demand derivatives. In Table \ref{tab:small-profit-weight}, we include results for a small neural network that cannot capture nonlinearities as well as a large neural network. It remains the case that we outperform all misspecified models even with a small neural network. 

\subsection{Interpretation with Pass-through}

A key concern of using a black box method such as neural networks is the interpretability of the resulting structural supply function $\hat h$. In an attempt to ``peek inside the black box,'' we simulate the pass-through of the resulting supply function and compare it to pass-through under the true model. The pass-through matrix shows us a rough shape of the supply function as we change own and rival costs.

To compute pass-through for the flexible supply function, we increase costs by $10$ percent, loading increases on the residuals $\hat \omega$. We then solve for equilibrium prices under different models of conduct and compare the prices before and after the cost change. Below, we report pass-through matrices for markets with median post-merger inside shares from our different simulation environments.\looseness=-1

\begin{table}[h!]
    \captionsetup{justification=centering}
    \centering
    \caption{Pass-through Comparison: Bertrand vs Profit Weight DGP}
    
    \begin{minipage}[t]{0.5\textwidth}
        \centering
        \textbf{Panel A: Bertrand DGP}
        
        \vspace{0.5em}
        \begin{minipage}[t]{0.45\textwidth}
            \centering
            \captionsetup{labelformat=empty}
            \caption{True Model}
            \begin{tabular}{cc}
                0.60 & 0.10 \\
                0.16 & 0.78
            \end{tabular}
        \end{minipage}%
        \begin{minipage}[t]{0.45\textwidth}
            \centering
            \captionsetup{labelformat=empty}
            \caption{VMM}
            \begin{tabular}{cc}
                0.65 & 0.12 \\
                0.12 & 0.72
            \end{tabular}
        \end{minipage}
    \end{minipage}%
    \begin{minipage}[t]{0.5\textwidth}
        \centering
        \textbf{Panel B: Profit Weight DGP}
        
        \vspace{0.5em}
        \begin{minipage}[t]{0.45\textwidth}
            \centering
            \captionsetup{labelformat=empty}
            \caption{True Model ($\kappa = 0.75$)}
            \begin{tabular}{cc}
                0.98 & -0.00 \\
                -0.43 & 0.41
            \end{tabular}
        \end{minipage}%
        \begin{minipage}[t]{0.45\textwidth}
            \centering
            \captionsetup{labelformat=empty}
            \caption{VMM}
            \begin{tabular}{cc}
                0.96 & -0.01 \\
                -0.44 & 0.46
            \end{tabular}
        \end{minipage}
    \end{minipage}

    \begin{tablenotes}[flushleft]
        \setlength\labelsep{0pt}
        \footnotesize
        \item Notes: The table reports the pass-through behavior of the true models (Bertrand and profit weight with $\kappa = 0.75$) and VMM. Own and rival costs are increased by 10\% (separately) and firms respond in prices under each model, loading cost increases on the model-implied $\omega$ terms. Reported numbers are price changes divided by the relevant cost changes. Market characteristics for Panel A: $c_1 = 15.08, \; c_2 = 15.33, \; s_1 = 0.61, \; s_2 = 0.04$. Market characteristics for Panel B: $c_1 = 15.72, \; c_2 = 10.20, \; s_1 = 0.03, \; s_2 = 0.60$.
    \end{tablenotes}
    \label{tab:pass-through-comparison}
\end{table}

Table \ref{tab:pass-through-comparison} shows pass-through matrices for both Bertrand and profit-weight simulations. Panel A compares the true pass-through matrix in the Bertrand environment with the pass-through implied by VMM, while Panel B presents the same comparison for the profit-weight environment. In both cases, the pass-through implied by VMM is remarkably close to the true pass-through matrix. The flexible model learns markup and cost functions that imply approximately correct pass-throughs. In the Bertrand environment, the model learns to pass-through own and rival cost increases and, in the profit-weight environment, the model learns to internalize some of the rival price increases after cost increases.

\subsection{Inference Performance}

After evaluating predictive performance in sample and in counterfactuals, and the interpretability of the flexible markup and functions, we quantify the uncertainty of the predictions. For ease of exposition, we pick a single observation at which to evaluate our inference procedure.


\begin{table}[h!]
    \centering 
    \captionsetup{font=normalsize}
    \caption{Inference Comparison by Sample Size (Small Network)} 
    \begin{threeparttable}
    \begin{tabular}{llcccc} 
    \\[-1.8ex] 
    \hline \\[-1.8ex] 
 Model & Sample Size & $\psi$ & $\hat{\psi}$ & $\hat{\sigma} / \sqrt{N}$ & Interval \\ \hline \\[-1.8ex]
 Bertrand & N = 253 & 21.014 & 21.092 & 0.817 & [18.673, 23.512] \\
 Bertrand & N = 2,579 & 20.341 & 20.474 & 0.057 & [20.305, 20.642] \\
 Profit Weight & N = 253 & 17.321 & 12.907 & 0.309 & [11.991, 13.822] \\
 Profit Weight & N = 2,579 & 17.375 & 15.554 & 0.099 & [15.261, 15.847] \\ [1ex] \hline
    \end{tabular} 
    \begin{tablenotes}[flushleft]
  \setlength\labelsep{0pt}
    \footnotesize
    \item Notes: The table reports the VMM variance estimates for a single observation in the data under different models and sample sizes. The model is trained on duopolies and triopolies. The true price $\psi$ and the model-implied price $\hat \psi$. The standard error $\hat \sigma / \sqrt{N}$ is reported with the implied confidence interval. The neural network used in estimation is small with a $3 \times 3$ hidden layer.
    \end{tablenotes}
    \end{threeparttable}
    \label{tab:inference}
\end{table}

The results in Table \ref{tab:inference} present the estimated standard errors and confidence intervals for our predictions under different simulation environments. The first two rows show that our predictions $\hat \psi$ for the Bertrand simulation are close to the true prices $\psi$, even in small samples, and the resulting standard errors and confidence intervals are small. Similarly, the profit-weight environment has quantifiable and tight confidence intervals that are shrinking with increasing sample sizes. 

\section{Empirical Application} \label{sec:application}

As a showcase of our method, we examine the US airline industry. The airline industry has received substantial attention from the IO literature \citep[starting with][]{berry1992estimation, bcs2006} given the rich available data and significant consolidation over the last two decades. The retrospective studies of large mergers have had mixed results \citep{peters2006evaluating}, potentially linked to non-Bertrand conduct \citep[see, e.g., ][for evidence of non-competitive conduct]{ciliberto2014does}. Our method applies well here given the large amount of data and variation in market structure. 

However, we emphasize the illustrative nature of our application. Recent papers have highlighted the dynamic nature of pricing and demand in this market \citep[e.g., ][]{w22,bhow22,gmw24,hnpsw24}, and the role of endogenous network structure \citep{cmt21,lmprsz22,bgr23,yb24}. We abstract away from these important elements to keep our application tied to the standard merger simulation toolkit.\looseness=-1 

\subsection{Background and Data}

The U.S. airline industry has experienced substantial consolidation in the last two decades with legacy carriers and low-cost airlines participating in large mergers. We show a descriptive increase in the Herfindahl-Hirschman Index (HHI) in Figure \ref{fig:hhi}. The earliest merger in our analysis is the Delta-Northwest merger in 2008. The merger was announced on April 14, 2008, and was approved on October 29, 2008, after roughly six months of review by the US Department of Justice (DOJ). Given the limited overlap between the merging airlines' operations, the merger was perceived as having a modest impact on competition. The second merger included is the United-Continental merger in 2010. The DOJ approved the merger after four months of review on August 27, 2010. As a condition of approval, the merged entity was required to lease slots to Southwest at Newark Liberty Airport in New Jersey. Finally, we consider the controversial merger of American Airlines and US Airways. The last of the ``mega-mergers'' that involved two airlines, the deal was announced on November 12, 2013. According to the settlement terms, the merged entity was required to divest slots at several major airports, most prominently at Ronald Reagan Washington National Airport and New York's LaGuardia Airport. More recently, outside our analysis, Alaska Airlines acquired Virgin America and Hawaiian Airlines, and a federal court blocked JetBlue's attempted acquisition of Spirit.\looseness=-1

We construct a database of the US airline industry during the period 2005-2019. We use the 10 percent sample of purchased airline tickets from the well-known Airline Origin and Destination Survey (DB1B) database released by the US Department of Transportation. Following \cite{azar2018anticompetitive} and \cite{kennedy2017competitive}, a market is defined as a pair of cities, regardless of the flight direction. We match cities to Metropolitan Statistical Areas (MSA) and collect data on the populations of these MSAs from the Bureau of Economic Analysis. The geometric mean of endpoint populations is used as a measure of the market size. A product is a one-way trip that services a particular city-pair and is defined at the carrier-market-quarter level. Additional details on the construction of the data can be found in Appendix \ref{sec:data-appendix}.

We evaluate the retrospective unilateral price effects of major airline mergers following previous studies of this industry (e.g., \cite{peters2006evaluating}) and other reduced-form investigations of the price effect of mergers (e.g., \cite{ashenfelter2013price}) by creating control groups of markets not affected by the merger. The rich cross-section of separate markets where airlines compete allows to distinguish between overlap markets -- markets where merging firms both have a sizable presence, which are thus likely to be affected by the mergers -- and markets where there is no overlap between merging airlines. As in \cite{peters2006evaluating}, we define overlap markets as those in which both merging airlines operate flights for the four quarters before the merger. We use the four quarters surrounding the merger approval as our pre- and post-merger samples. The sample is restricted to markets with at most five firms in the pre-merger period. The results, presented in Table \ref{tab:did}, show limited effects of the first two mergers while the American-US merger resulted in higher prices.

\subsection{Demand and Supply Estimation} \label{sec:demand}

\noindent\textbf{Demand Estimation:} We follow \cite{berry2010tracing} in adopting a nested logit demand model. We briefly summarize the model here and provide additional details in Appendix \ref{sec:demand-appendix}. Product characteristics include average fares, the share of nonstop flights, the average distance in thousands of miles, and a squared distance term. We restrict our attention to the major carriers, controlling for the number of fringe firms to capture variation in market structure over time across origin-destination pairs. We include origin-destination fixed effects. Our nesting structure includes all inside goods in one nest. We include instruments to handle endogeneity issues for prices and nests. We use BLP instruments as the average rival distance, the average number of markets a rival serves, and the number of rival carriers.

The results for demand estimation are reported in Table \ref{tab:demand}. In line with the previous literature, we find that consumers prefer a higher share of nonstop flights and incur disutility from more miles traveled. There is strong within-nest substitution. The median own-price elasticity of $-5.17$ matches the literature well.\\

\noindent\textbf{Pre-merger Supply Estimation:} We consider two models of conduct for our supply specifications: (i) Bertrand pricing and (ii) a flexible supply function as described in Section \ref{sec:flexible-model}. In this section, we focus on the flexible specification and relegate details of the Bertrand specification to Appendix \ref{sec:supply-appendix}. 

We estimate the flexible supply model with the VMM technique described in Section \ref{sec:vmm}.\footnote{In our baseline specification, we do not include demand derivatives, although the model should fit them given the data at hand. As a robustness check, we fit a model that includes demand derivatives but found negligible differences in performance. 
} We include in the supply function market shares and the average distance in thousands of miles as an observable cost shifter. We also include origin-destination fixed effects. We instrument the endogenous market shares with BLP instruments formed with the following characteristics: average rival distance, average number of markets a rival serves, and number of rival carriers. Additionally, we include own-product characteristics that do not directly impact marginal costs -- the share of nonstop flights and squared average distance in thousands of miles -- as excluded instruments.

Having estimated the two models, we can evaluate their fit in the pre-merger data. The flexible supply function estimated with VMM significantly outperforms the Bertrand conduct assumption by around 36\%; this margin is similar in the training data and a test sample. We stratify the data by market and split it into 80\% of markets for training, leaving the remaining 20\% of markets as a hold-out test sample to evaluate the fit.

\subsection{Merger Simulation Results}

We examine the merger of American Airlines and US Airways in our counterfactuals. As in the difference-in-differences, we focus on markets with three firms in the pre-merger period and two firms in the post-merger period. We compare the predictions of the models to the true post-merger prices. The observed price differences are presented in Figure \ref{fig:posttruth}.

\begin{figure}[htb]
    \centering
        \caption{Merger Simulation Results}
    \begin{minipage}[t]{0.5\textwidth}
        \centering
        \caption*{Panel A. Simulated Price Changes}
        \includegraphics[width=\textwidth]{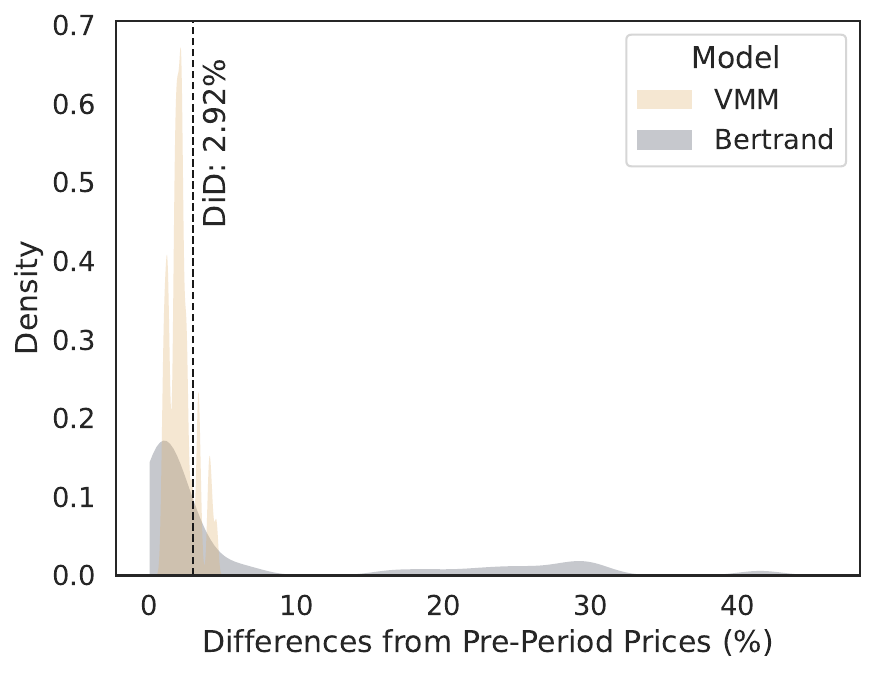}
    \end{minipage}
    \begin{minipage}[t]{0.49\textwidth}
        \centering
        \caption*{Panel B. Post-Merger Price Prediction Error}
        \includegraphics[width=\textwidth]{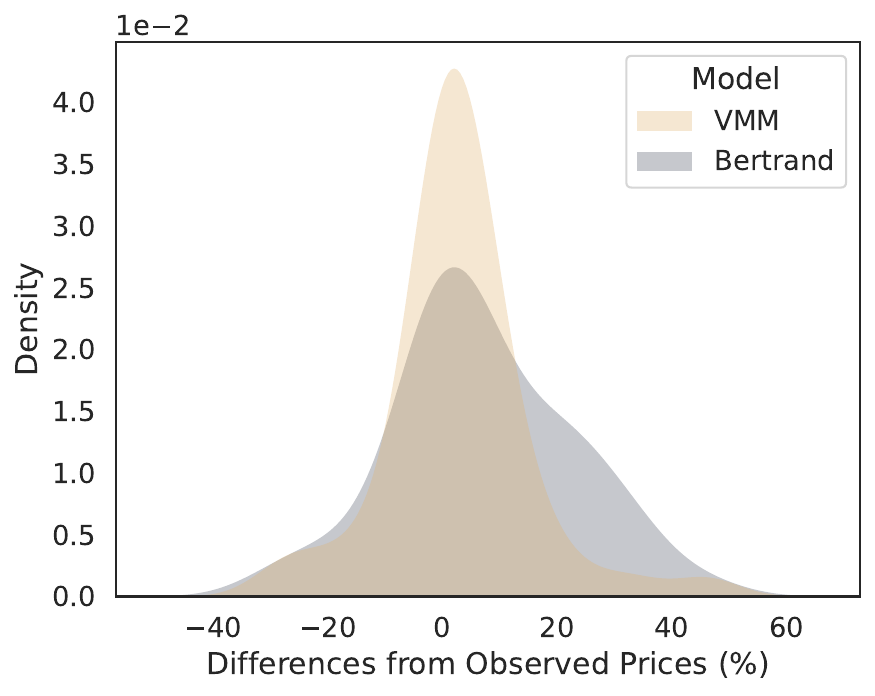}
    \end{minipage}
    \caption*{\footnotesize{Notes: The figure reports merger simulation results for the flexible model estimated with VMM (in yellow) and the standard merger simulation model (in blue). Panel A reports the distribution of percent differences between pre-merger and post-merger predicted prices. Panel B reports the distribution of percent differences between observed and post-merger predicted prices. Figure \ref{fig:post-breakdown} breaks down Panel B by merger status.}}
    \label{fig:msim}
\end{figure}

Figure \ref{fig:msim} presents the results of the merger simulation exercise. Panel A represents the distribution of predicted price increases from the merger under the flexible supply model and under the standard merger simulation with Bertrand. We find that the flexible model predicts a fairly tight distribution of price changes, with an average predicted change of $2.16\%$ (median $2.05\%$). This is in line with the descriptive DiD result, which found a retrospective price change in the data of $2.92\%$ in the markets we examine. In contrast, the standard merger simulation procedure with Bertrand conduct predicts an average price change of $6.66\%$ (median $1.45\%$), which is largely driven by a few large predicted changes (up to $40\%$ increases). The mass of large price increases are predicted for markets that move closer to a monopoly after the merger. More specifically, the market was dominated by US Airways and American Airlines; the third firm was a fringe in the market with a very small market share. After the merger, the market appears close to a monopoly, leading to large price increases under the Bertrand assumption.

To analyze the source of these different predictions, we now compare whether the flexible model or the standard merger simulation model more closely matches realized post-merger prices. Panel B of Figure \ref{fig:msim} plots the distribution of percentage differences between predicted and observed prices for the two methods in the post-merger period. The flexible model estimated with VMM is centered at zero, with a large fraction of predicted prices within $20\%$ of realized prices, and an MSE of $67$. Instead, the standard merger simulation method over-predicts changes and has an overall MSE of $366$. In sum, our flexible model substantially outperforms the standard toolkit when predicting post-merger prices for the American-US Airways merger.\looseness=-1 \\ 

\noindent \textbf{Quantifying Uncertainty:} Finally, we quantify the uncertainty of our predictions in the merger simulation exercise. We follow Algorithm \ref{alg:inference} to construct the confidence intervals. Notably, we use a more conservative Bonferroni correction for ease of exposition, allowing us to present the results with a single set of bounds for each point. We construct bounds for all points in the sample selected for merger simulation.

The results are presented in Appendix Figure \ref{fig:inferencewidth}, where we show the total width of the confidence interval as a summary of uncertainty. The predictions of the inference exercise show that uncertainty is roughly constant with prediction levels. Further, and more importantly, the width is at a reasonably small level (mostly between 5-10 for predicted prices in the range 100-300) in which we can make precise point estimates even with a high-dimensional supply function. 

\section{Conclusion} \label{sec:conclusion}

This paper demonstrates how machine learning methods can enhance merger analysis while maintaining economic structure. Our approach relaxes standard assumptions about firm conduct when rich market data are available, adapting recent advances in variational inference to estimate a flexible model of supply that nests both Bertrand-Nash and alternative models of competition. Applied to the American Airlines-US Airways merger, the method produces more accurate predictions of post-merger prices compared to traditional merger simulation approaches, while maintaining computational tractability. The framework also provides valid statistical inference, allowing antitrust practitioners to quantify uncertainty in their predictions. Beyond mergers, our methodology opens new possibilities for empirical work in IO by showing how machine learning tools can be integrated into structural economic models without sacrificing their economic content.

The method involves important trade-offs and limitations. To mention one, while our method performs well when the post-merger market structure has some precedent in the training data, it may be less reliable for mergers that would create entirely novel market structures, or for which the researcher lacks relatively rich pre-merger data on market outcomes. An important direction for future research is developing hybrid approaches that combine data-driven flexibility with economic restrictions in settings where data are limited. Despite these limitations, our results suggest that careful application of modern machine learning methods can significantly improve our ability to predict merger effects while maintaining the conceptual foundations that make merger simulation such a valuable tool for antitrust analysis.

\pagebreak

\bibliographystyle{ACM-Reference-Format}
\bibliography{MergerSim}

\appendix

\setcounter{table}{0}
\renewcommand{\thetable}{A\arabic{table}}
\setcounter{figure}{0}
\renewcommand{\thefigure}{A\arabic{figure}}
\setcounter{equation}{0}
\renewcommand{\theequation}{A\arabic{equation}}

\pagebreak

\section{Additional Simulation Results} \label{sec:simulation-appendix}

\begin{table}[htb]
\footnotesize
\caption{MSE Across Models (Bertrand DGP, Large Network)}
\label{tab:large-bertrand}
\centering
\begin{threeparttable}
        \begin{widetable}{.98\columnwidth}{lcccccc}\toprule
             $T$  & True Model & \multicolumn{3}{c}{Standard Models} & VMM & $D_t$ included \\ \cline{3-5}
             & &  $B$ & $M$ & $P$ & & \\
             \midrule
             \multirow{2}{*}{$100$}  & \multirow{2}{*}{$0.895$} & \multirow{2}{*}{$0.895$} & \multirow{2}{*}{$578.773$} & \multirow{2}{*}{$9.411$} & $2.083$ & No \\
               &  &  & &  & $1.126$ & Yes \\ 
             \midrule
             \multirow{2}{*}{$1,000$}  & \multirow{2}{*}{$0.889$} & \multirow{2}{*}{$0.889$} & \multirow{2}{*}{$1022.812$} & \multirow{2}{*}{$8.170$} & $1.091$ & No \\
              &  &  &  &  & $1.103$ & Yes \\ 
             \midrule
             \multirow{2}{*}{$10,000$}  & \multirow{2}{*}{$0.962$} &  \multirow{2}{*}{$0.962$} & \multirow{2}{*}{$1788.816$} & \multirow{2}{*}{$8.487$} & $1.226$ & No \\
               &  &  &  &  & $1.385$ & Yes \\ \bottomrule
        \end{widetable} 
        \begin{tablenotes}[flushleft]
  \setlength\labelsep{0pt}
    \footnotesize
    \item Notes: The table reports the mean squared error (MSE) in prices for the true model (Bertrand), a set of standard models, and the VMM model. The model is trained on duopolies and triopolies. The MSE is computed on a hold-out test sample restricted to markets where merging firms are present. The neural network used in estimation is large with a $100 \times 100$ hidden layer.
    \end{tablenotes}
    \end{threeparttable}
\end{table}

\begin{table}[htb]
\footnotesize
\caption{MSE Across Models (Profit Weight DGP, Small Network)}
\label{tab:small-profit-weight}
\centering
\begin{threeparttable}
        \begin{widetable}{.98\columnwidth}{lcccccc}\toprule
             $T$  & True Model & \multicolumn{3}{c}{Standard Models} & VMM & $D_t$ included \\ \cline{3-5}
             & &  $B$ & $M$ & $P$ & & \\
             \midrule
             \multirow{2}{*}{$100$}  & \multirow{2}{*}{$0.895$} & \multirow{2}{*}{$8.351$} & \multirow{2}{*}{$5.973$} & \multirow{2}{*}{$13.372$} & $3.355$ & No \\
               &  &  & &  & $3.694$ & Yes \\ 
             \midrule
             \multirow{2}{*}{$1,000$}  & \multirow{2}{*}{$0.889$} & \multirow{2}{*}{$7.877$} & \multirow{2}{*}{$6.650$} & \multirow{2}{*}{$9.962$} & $3.120$ & No \\
              &  &  &  &  & $2.006$ & Yes \\ 
             \midrule
             \multirow{2}{*}{$10,000$}  & \multirow{2}{*}{$0.962$} &  \multirow{2}{*}{$8.719$} & \multirow{2}{*}{$7.049$} & \multirow{2}{*}{$10.853$} & $2.993$ & No \\
               &  &  &  &  & $2.170$ & Yes \\ \bottomrule
        \end{widetable} 
        \begin{tablenotes}[flushleft]
  \setlength\labelsep{0pt}
    \footnotesize
    \item Notes: The table reports the mean squared error (MSE) in prices for the true model (profit weight with $\kappa = 0.75$), a set of standard models, and the VMM model. The model is trained on duopolies and triopolies. The MSE is computed on a hold-out test sample restricted to markets where merging firms are present. The neural network used in estimation is small with a $3 \times 3$ hidden layer.
    \end{tablenotes}
    \end{threeparttable}
\end{table}

\begin{table}[H]
    \captionsetup{justification=centering}
    \caption{Prediction Error for Bertrand DGP \\ Merger Simulation (Triopolies)}
    \begin{minipage}{0.6\linewidth}
        \captionsetup{justification=centering}
        \centering
        \label{fig:ms_comparison_baseline_tri}
        \includegraphics[width=\textwidth]{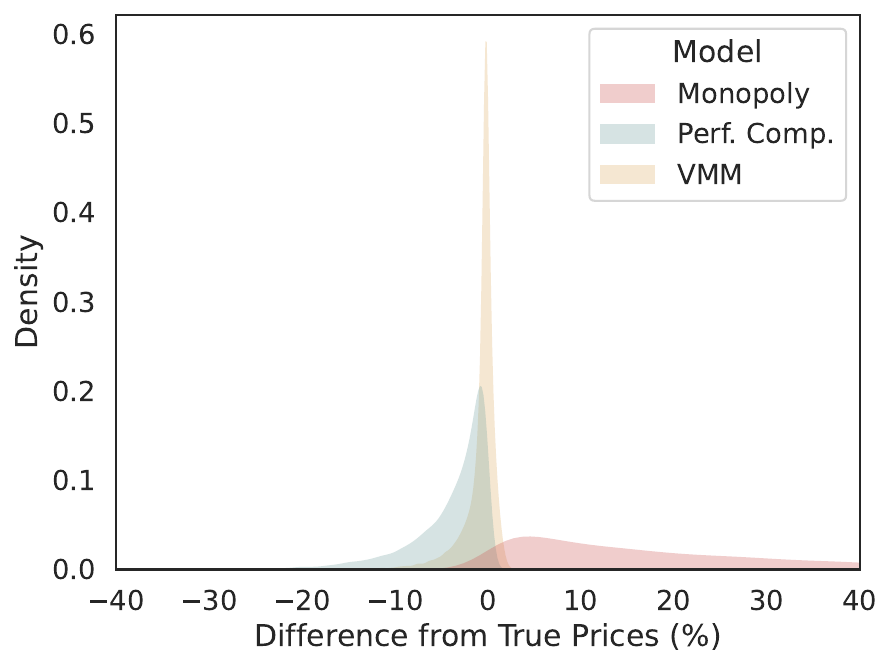}
    \end{minipage}
    \begin{minipage}{0.2\linewidth}
        \centering
        \begin{tabular}{lc}
             Model & MSE \\ \hline 
             Bertrand & 0.00 \\ 
             Monopoly & 25.94 \\ 
             Perf. Comp. & 1.81 \\ 
             VMM & 0.28
        \end{tabular}
    \end{minipage}
    \begin{tablenotes}[flushleft]
  \setlength\labelsep{0pt}
    \footnotesize
    \item Notes: The figure and table show the distribution and mean squared error (MSE) in prices for the true model (Bertrand), a set of standard models, and the VMM model in the merger simulation exercise. The model is trained on triopolies. The MSE is computed on a hold-out test sample restricted to markets where merging firms are present. The neural network used in estimation is small with a $3 \times 3$ hidden layer.
    \end{tablenotes}
    \label{tab:bertrand-tri-mergers}
\end{table}

\begin{table}[H]
    \captionsetup{justification=centering}
    \caption{Prediction Error for Profit Weight DGP \\ Merger Simulation (Triopolies)}
    \begin{minipage}{0.6\linewidth}
        \centering
        \label{fig:ms_comparison_complex_tri}
        \includegraphics[width=\textwidth]{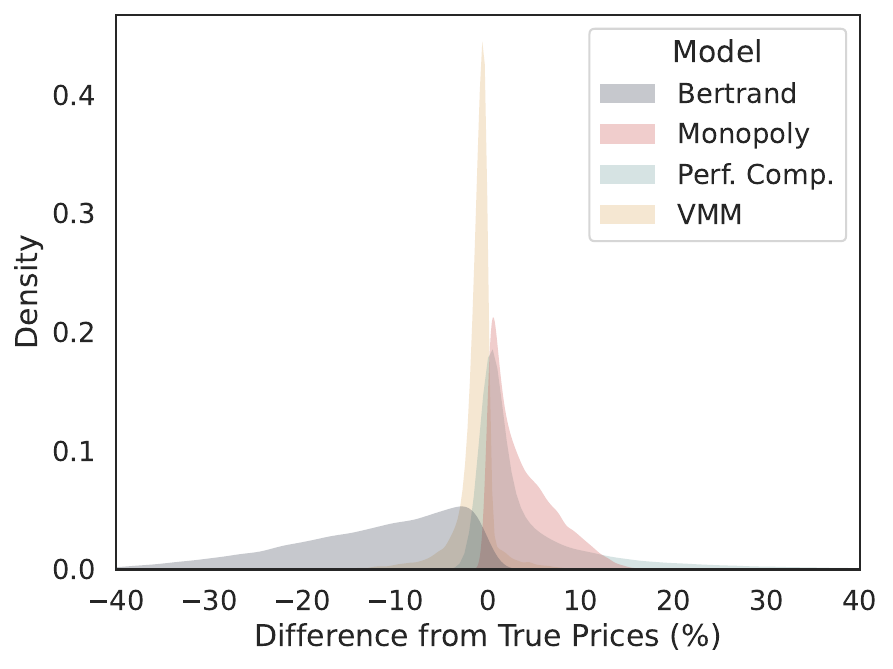}
    \end{minipage}
    \begin{minipage}{0.2\linewidth}
        \centering
        \begin{tabular}{lc}
             Model & MSE \\ \hline 
             Bertrand & 15.02 \\ 
             Monopoly & 1.53 \\ 
             Perf. Comp. & 3.54 \\ 
             VMM & 0.28
        \end{tabular}
    \end{minipage}
    \begin{tablenotes}[flushleft]
  \setlength\labelsep{0pt}
    \footnotesize
    \item Notes: The figure and table show the distribution and mean squared error (MSE) in prices for the true model (Bertrand), a set of standard models, and the VMM model in the merger simulation exercise. The model is trained on triopolies. The MSE is computed on a hold-out test sample restricted to markets where merging firms are present. The neural network used in estimation is small with a $3 \times 3$ hidden layer.
    \end{tablenotes}
    \label{tab:profit-weight-tri-mergers}
\end{table}

\setcounter{table}{0}
\renewcommand{\thetable}{B\arabic{table}}
\setcounter{figure}{0}
\renewcommand{\thefigure}{B\arabic{figure}}
\setcounter{equation}{0}
\renewcommand{\theequation}{B\arabic{equation}}

\section{Economic and Statistical Restrictions}\label{app:regular}

As noted in the main text, one may want to impose additional economic or statistical restrictions on the $h$ function. For instance, one may want $h_j$ to be decreasing in that product's own demand elasticity, as is the case in many standard models. This is in the spirit of the micro-founded economic restrictions that are imposed on nonparametric demand systems in \cite{c20}. It is possible to add these restrictions to our model through regularization, restrictions on weights and activation functions, neural network architecture, or some combination of these. We discuss each of these in turn within the example of monotonicity in own demand elasticities.

The first approach is to incorporate additional components in the regularization term $R_N$ in Equation \eqref{eq:vmm}. Define a set of $n$ own-demand elasticities in increasing order as $D = (D_{(1)}, ..., D_{(n)})$. An example of a simple regularization term $R^M$ for monotonicity is the following:
\begin{equation*}
    R^M(h) = \sum_{i = 2}^n (\max\{h(D_{(i)}) -  h(D_{(i - 1)}), 0 \})^2
\end{equation*}

We condition on $s_t$, $w_t$, $\theta$, and $\mathcal{H}_t$ in the supply function $h$, suppressing them for notational simplicity. If $h(D_{(i)}) \leq h(D_{(i-1)})$, there is no additional penalty on the supply function, but otherwise, we penalize the squared first difference of own-demand elasticities in the spirit of a ridge regression. We note that the choice of regularization is a degree of freedom for the researcher.

The last two approaches are related to the rich computer science literature on monotonic neural networks, starting with \cite{sill1997monotonic}. The first approach enforces constraints on weights and activation functions in particular neural network layers, e.g., \cite{you2017deep}. The user can specify inputs, such as own-demand elasticities, in which the output is monotonic. A second approach, detailed in \cite{wehenkel2019unconstrained}, uses the architecture of the neural network to enforce a constant sign of the derivative of the approximated function without imposing additional constraints. In our specific example, we can restrict the derivative of $h$ with respect to own-demand elasticity to be negative.

The discussion above focuses on a single example of an economic restriction. These approaches can be adapted and combined to impose additional economic or statistical restrictions on the supply function.

\setcounter{table}{0}
\renewcommand{\thetable}{C\arabic{table}}
\setcounter{figure}{0}
\renewcommand{\thefigure}{C\arabic{figure}}
\setcounter{equation}{0}
\renewcommand{\theequation}{C\arabic{equation}}

\section{Additional Empirical Results} 

\subsection{Data Construction} \label{sec:data-appendix}

We construct a database of the US airline industry from 2005-2019. We obtained a quarterly random 10 percent sample of purchased airline tickets from the well-known Airline Origin and Destination Survey (DB1B) database released by the US Department of Transportation. Following \cite{azar2018anticompetitive} and \cite{kennedy2017competitive}, a market is defined as a pair of cities, regardless of the flight direction. We match cities to Metropolitan Statistical Areas and collect data on the populations of these MSAs from the Bureau of Economic Analysis. A product is a one-way trip that services a particular city-pair and is defined at the carrier-market-quarter level. Market sizes are measured as the geometric mean of the origin-destination endpoint populations

\subsubsection{Sample Selection}

We exclude markets with fewer than 20 passengers per day, as airline behavior on these thin, possibly seasonal, routes is unlikely to represent normal competitive behavior in the industry. We also drop itineraries with a ticket carrier change at the connecting airport since these tickets cannot be assigned to a unique ticketing carrier. Finally, we drop every ticket with a fare lower than \$25 and higher than \$2,500 since these tickets are likely the result of reporting errors.

For each carrier-market-quarter, we begin by calculating the product's average price, total passengers, and average distance. Additionally, we construct each product's extra miles - the difference between the average distance in miles and the nonstop distance in the market - and the fraction of nonstop tickets sold. Averages are weighted by the number of passengers. We remove any products with fewer than 800 quarterly passengers. These products arguably have a weak impact on the competitive behavior of carriers with higher market shares, and although this is standard practice in this literature (e.g. \cite{berry2010tracing}) because the dimensionality of the input space is very important in our application, this rule allows us to find more markets with effectively fewer carriers. Summary statistics for the analysis sample are presented in Table \ref{tab:summary}.

\begin{figure}[h!]
    \centering
        \caption{Concentration in the Airline Industry}
    \includegraphics[width=0.7\textwidth]{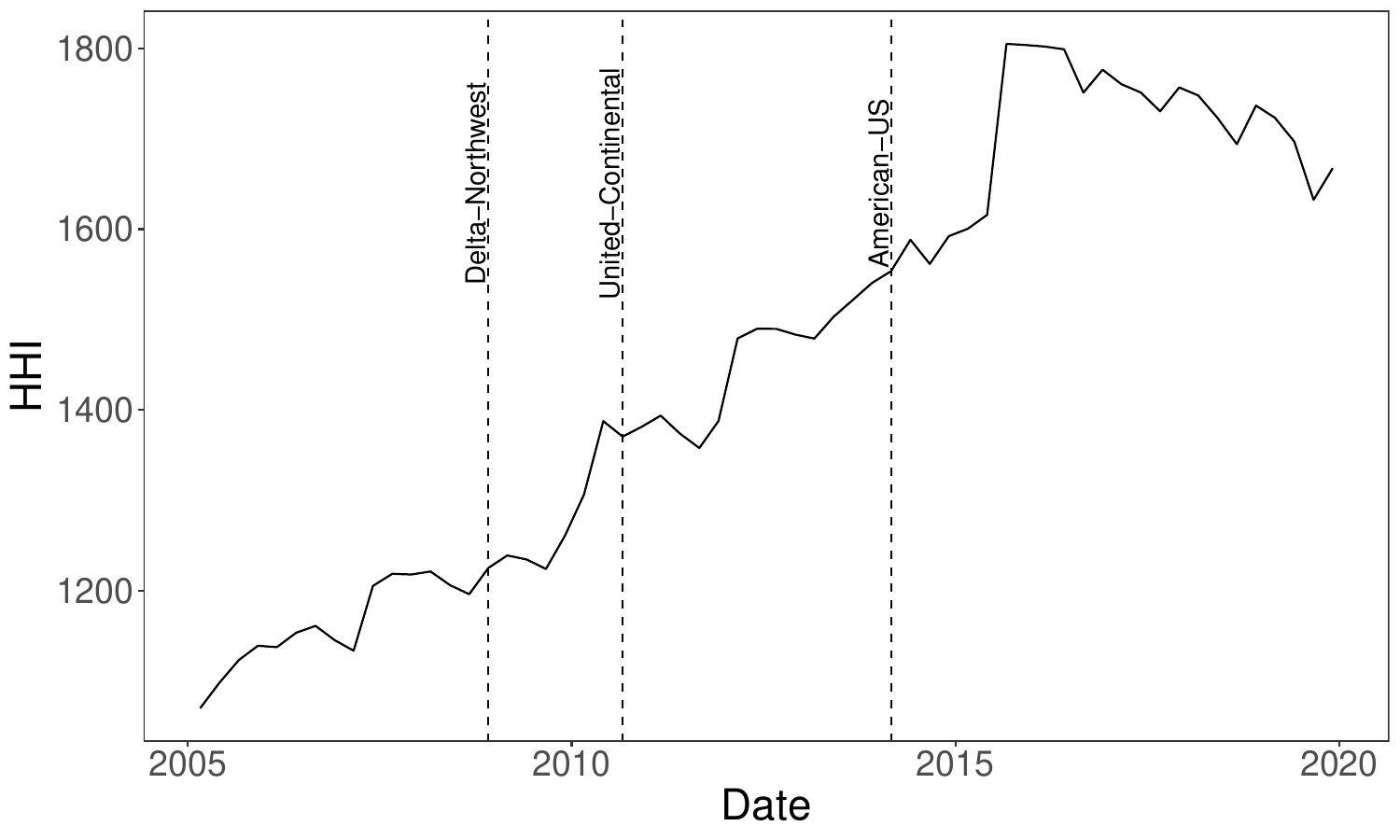}
    \caption*{Notes: The figure plots the evolution of the national HHI of the airline industry during the period 2005-2019. We use passenger counts to construct market shares used in the calculation of HHI.}
    \label{fig:hhi}
\end{figure}

\begin{table}[H] \centering 
\footnotesize
\begin{minipage}{\textwidth}
  \caption{Summary Statistics} 
  \label{tab:summary} 
\begin{tabular}{@{\extracolsep{5pt}}lccccccc} 
\\[-1.8ex]\hline 
\hline \\[-1.8ex] 
Statistic & \multicolumn{1}{c}{Mean} & \multicolumn{1}{c}{St. Dev.} & \multicolumn{1}{c}{Min} & \multicolumn{1}{c}{Pctl(25)} & \multicolumn{1}{c}{Median} & \multicolumn{1}{c}{Pctl(75)} & \multicolumn{1}{c}{Max} \\ 
\hline \\[-1.8ex] 
Average Fare & 216.782 & 82.819 & 25.000 & 169.774 & 209.072 & 252.704 & 2,492.017 \\ 
Total Passengers & 352.395 & 1,184.007 & 1 & 19 & 69 & 210 & 70,909 \\ 
Average Distance & 1,386.534 & 688.694 & 67.000 & 861.000 & 1,255.081 & 1,872.045 & 7,731.500 \\ 
Average Nonstop Miles & 1,170.974 & 618.988 & 67.000 & 678.000 & 1,035.000 & 1,600.000 & 2,783.000 \\ 
Average Extra Miles & 215.560 & 247.441 & $-$1.000 & 37.548 & 136.000 & 305.862 & 5,118.000 \\ 
Share Nonstop & 0.205 & 0.363 & 0.000 & 0.000 & 0.000 & 0.179 & 1.000 \\ 
Origin Hub & 0.153 & 0.360 & 0 & 0 & 0 & 0 & 1 \\ 
Dest. Vacation & 0.096 & 0.294 & 0 & 0 & 0 & 0 & 1 \\ 
LCC & 0.207 & 0.405 & 0 & 0 & 0 & 0 & 1 \\ 
Major & 0.920 & 0.271 & 0 & 1 & 1 & 1 & 1 \\ 
Legacy & 0.733 & 0.443 & 0 & 0 & 1 & 1 & 1 \\ 
Presence & 0.148 & 0.142 & 0.00000 & 0.050 & 0.099 & 0.201 & 1.000 \\ 
Num Markets & 50.434 & 30.156 & 1 & 27 & 47 & 72 & 146 \\ 
Share & 0.001 & 0.003 & 0.00000 & 0.0001 & 0.0004 & 0.001 & 0.095 \\ 
Within Share & 0.184 & 0.219 & 0.00001 & 0.022 & 0.096 & 0.268 & 1.000 \\ 
\hline \\[-1.8ex] 
\end{tabular} 
\begin{tablenotes}[flushleft]
        \setlength\labelsep{0pt}
        \footnotesize
        \item Notes: The table reports summary statistics for our analysis sample. We include the mean, standard deviation, 25th percentile, median, 75th percentile, and max of each of the variables included in our sample. 
    \end{tablenotes}
\end{minipage}
\end{table}

\subsection{Demand Estimation} \label{sec:demand-appendix}

We include additional details on demand estimation introduced in Section \ref{sec:demand}. In our demand system, consumer $i$ receives utility from product $j$ in market $t$ with the following indirect utility:
\begin{equation*}
    u_{ijt} = \alpha p_{jt} + x_{jt} \beta + \xi_{m(t)} + \xi_{jt} + \zeta_{it} + (1 - \rho) \varepsilon_{ijt}
\end{equation*}

The vector $x$ includes the share of nonstop flights, average distance in thousands of miles, the squared term of average distance in thousands of miles, and the logged number of fringe firms (plus one to avoid zero issues). The last term is included to focus on the demand for major carriers while controlling for additional variation over time in market structure across origin-destination pairs. The term $\xi_{t}$ is a set of origin-destination fixed effects. $\xi_{jt}$ and $\zeta_{it} + \varepsilon_{ijt}$ are unobservable shocks at the product-market and individual-product-market levels, respectively. We assume that $\varepsilon_{ijt}$ is distributed Type I Extreme Value and $\zeta_{it}$ is distributed according to the conjugate distribution. We close the model by normalizing the utility of consumer $i$ from the outside option to $u_{i0t} = \varepsilon_{i0t}$. Given the structure of utility and distributional assumptions, market shares $s_{jt}$ are a function of observables, unobservables, and parameters in the standard form from \cite{blp95}.

The identifying assumption for demand is that the moment condition $\mathbb{E}[\xi_{jt} z_{jt}^D] = 0$ holds for a vector of demand instruments $z_{jt}^D$. Following \cite{blp95}, we include the average rival distance, the average number of markets a rival serves, and the number of rival carriers. The last instrument is especially useful for identifying the nesting parameter. The results of demand estimation are presented in Table \ref{tab:demand}. The results and median own-price elasticities are in line with the literature.

\subsection{Supply Estimation} \label{sec:supply-appendix}

In Section \ref{sec:demand}, we briefly introduced the supply side and focused on the flexible supply function. In the Bertrand specification, we assume that inferred marginal costs are linear in observable cost shifters:
\begin{equation*}
    p_{jt} - \Delta^B_{jt} \equiv c_{jt} = w_{jt}\gamma + \Gamma_{m(j)} + \omega_{jt}
\end{equation*}

We include only the average distance in thousands of miles in $w_{jt}$ and origin-destination fixed effects. The results are presented in Table \ref{tab:supply}. We find that distance is positively associated with marginal costs inferred under the Bertrand assumption of conduct.

\clearpage
\subsection{Baseline Empirical Results}


\begin{table}[H] \centering 
  \caption{Difference-in-Differences Estimates} 
  \label{tab:did} 
\begin{threeparttable}
        \begin{widetable}{.98\columnwidth}{lccc}\toprule
    & \multicolumn{3}{c}{$\log$(Fare)}\\
                                    & DL-NW         & UA-CO          & AA-US \\   
                                    & (1)           & (2)            & (3)\\  
   \midrule 
   Treated $\times$ Post            & -0.0322                 & 0.0103                  & 0.0292$^{***}$\\   
                                    & (0.0229)                & (0.0165)                & (0.0091)\\   
   Share Nonstop                    & -0.2753$^{***}$         & -0.2620$^{***}$         & -0.1160$^{***}$\\   
                                    & (0.0416)                & (0.0513)                & (0.0351)\\   
   Average Distance                 & $5.66\times 10^{-5}$    & $7.8\times 10^{-5}$     & 0.0002$^{***}$\\   
                                    & ($5.83\times 10^{-5}$)  & ($6.88\times 10^{-5}$)  & ($3.18\times 10^{-5}$)\\    
    \\
   $R^2$                            & 0.5817                 & 0.4897                 & 0.4001\\  
   Observations                     & 3,938                   & 5,178                   & 9,965\\  
   Treated (\%)                     & 19.880                  & 22.210                  & 52.230
    \\ \midrule
   Origin-destination fixed effects & $\checkmark$    & $\checkmark$    & $\checkmark$\\   
   Year-quarter fixed effects       & $\checkmark$    & $\checkmark$    & $\checkmark$\\
 \bottomrule
        \end{widetable} 
        \begin{tablenotes}[flushleft]
  \setlength\labelsep{0pt}
    \footnotesize
    \item Notes: The table reports estimates from the DID regression of log fares on treated $\times$ post and market characteristics including the share of nonstop flights and the average distance.  We consider three mergers: Delta-Northwestern in Column (1), United-Continental in Column (2), and American-US in Column (3).
    \end{tablenotes}
    \end{threeparttable}
\end{table}

\begin{table}[H] \centering 
  \caption{Demand Estimates} 
  \label{tab:demand}
  \begin{threeparttable}
    \begin{widetable}{.8\columnwidth}{lc}
   \toprule
                                    & $\log(s_{jt})$ - $\log(s_{0t})$ \\    
   \midrule 
   Average Fare                     & -0.0048$^{***}$\\   
                                    & (0.0004)\\   
   $\log(S_t)$                        & 0.8356$^{***}$\\   
                                    & (0.0133)\\   
   Share Nonstop                    & 0.4030$^{***}$\\   
                                    & (0.0282)\\   
   Average Distance (1,000's)       & -0.4881$^{***}$\\   
                                    & (0.0498)\\   
   Average Distance$^2$ (1,000's)    & 0.0485$^{***}$\\   
                                    & (0.0045)\\   
   log(1 + Num. Fringe)             & -0.2642$^{***}$\\   
                                    & (0.0057)   
    \\ \midrule
   R$^2$                            & 0.94238\\  
   Observations                     & 1,283,472\\  
   Own-price elasticity             & -5.1652 \\  
    \\
   Origin-destination fixed effects & $\checkmark$\\   
   \bottomrule
\end{widetable}
\begin{tablenotes}[flushleft]
  \setlength\labelsep{0pt}
    \footnotesize
    \item Notes: The table presents results from demand estimation. Prices and log-within-shares $\log(S_t)$ are instrumented with the average rival distance, the average number of markets a rival serves, and the number of rival carriers. We include origin-destination fixed effects and cluster standard errors at the origin-destination level. The median own-price elasticity is $-5.1652$ which is in line with the literature.
    \end{tablenotes}
    \end{threeparttable}
\end{table}
\begin{table}[H] \centering 
    \caption{Bertrand-Implied Marginal Cost Estimates} 
    \label{tab:supply} 
    \begin{threeparttable}
    \begin{widetable}{.75\columnwidth}{lc}
       \toprule
                                        & Marginal Cost\\  
       \midrule 
       Average Distance (1,000's)       & 63.17$^{***}$\\   
                                    & (0.9502)   
        \\ \midrule
       R$^2$                            & 0.42757\\  
       Observations                     & 1,283,472\\  
        \\
       Origin-destination fixed effects & $\checkmark$\\   
       \bottomrule
    \end{widetable}
    \begin{tablenotes}[flushleft]
  \setlength\labelsep{0pt}
    \footnotesize
    \item Notes: The table presents results from supply estimation. We include origin-destination fixed effects and cluster standard errors at the origin-destination level.
    \end{tablenotes}
    \end{threeparttable}
\end{table}

\subsection{Counterfactuals} \label{sec:counterfactual-appendix}

\begin{figure}[h!]
    \centering
        \caption{Price Change Distribution}
    \includegraphics[width=0.75\textwidth]{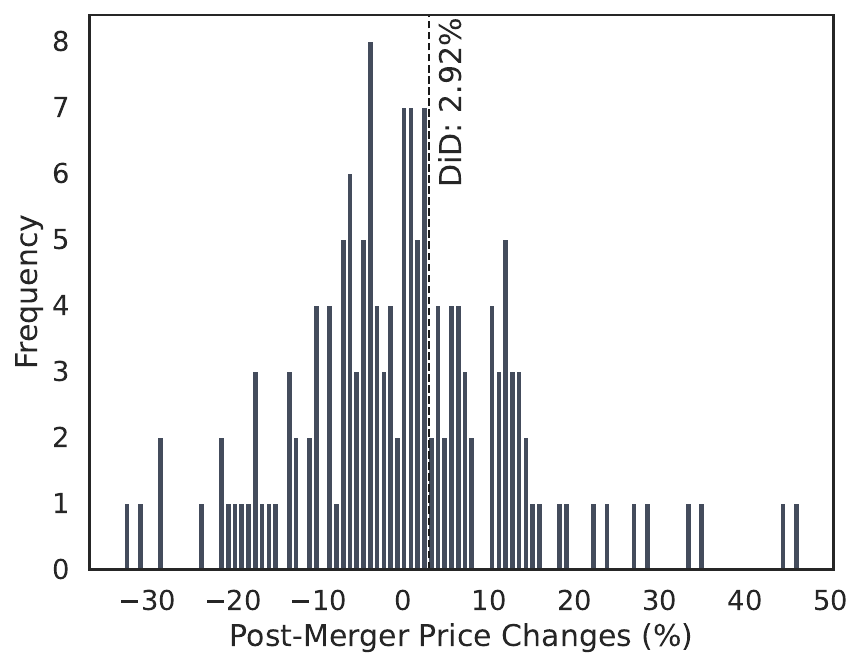}
    \caption*{Notes: The figure plots the observed post-merger price changes after the US-AA merger.}
    \label{fig:posttruth}
\end{figure}

\begin{figure}[h!]
    \centering
        \caption{Post-Merger Price Prediction Error by Status}
    \includegraphics[width=\textwidth]{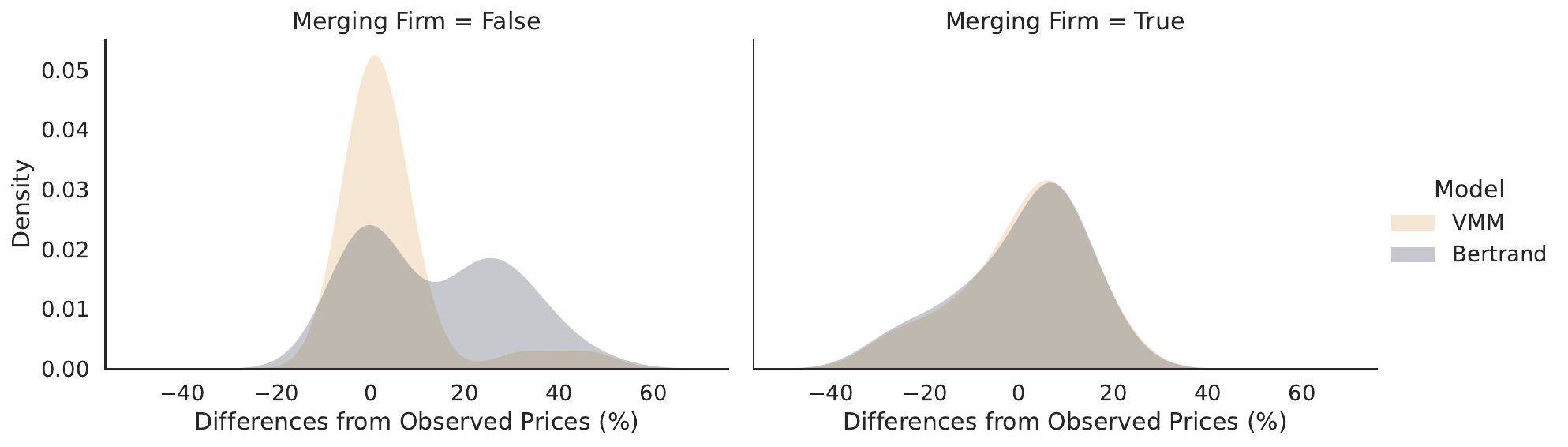}
    \caption*{Notes: The figure reports merger simulation results for the flexible model estimated with VMM (in yellow) and the standard merger simulation model (in blue) broken down by firm status. It reports the distribution of percent differences between observed and post-merger predicted prices.}
    \label{fig:post-breakdown}
\end{figure}

\begin{figure}[h!]
    \caption{Width of Confidence Intervals}
    \centering
    \includegraphics[width=0.7\textwidth]{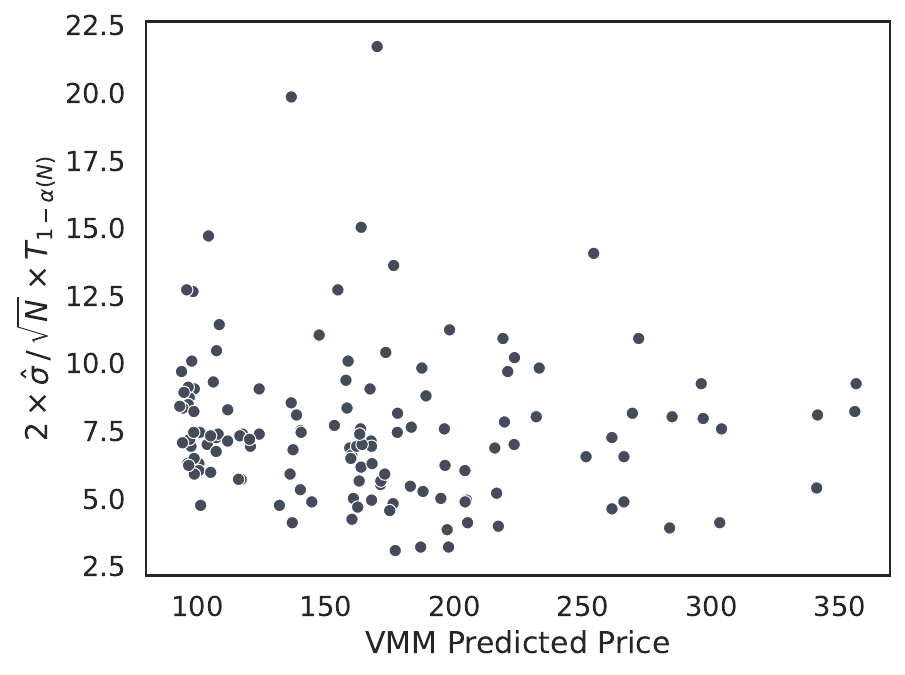}
    \caption*{\footnotesize{Notes:} The figure represents the scatter plot of VMM predicted prices (on the horizontal axis) against the width of the confidence interval for the predictions. Variance is estimated for all firms in markets affected by the US Airways-American merger.}
    \label{fig:inferencewidth}
\end{figure}

\end{document}